\pgfplotsset{compat=1.18}
\newcommand{\set}[1]{\left\{#1\right\}}
\newcommand{\fpr}[1]{\mathopen{}\left(#1\right)}
\newcommand{\abs}[1]{{\left|#1\right|}}
\newcommand{\enset}[2]{\left\{#1 ,\ldots , #2\right\}}
\newcommand{\np}{\textbf{NP}}
\newcommand{\poly}{\textbf{P}}
\newcommand{\lab}{\mathit{lab}}
\newcommand{\fm}[1]{{\mathcal{#1}}}
\DeclareRobustCommand{\dispfunc}[2]{%
    \ensuremath{%
        \ifthenelse{\equal{#2}{}}%
            {\mathit{#1}}%
            {\mathit{#1}\fpr{#2}}}}
\newcommand{\bigO}[1]{\dispfunc{\mathcal{O}}{#1}}
\newcommand{\prbld}{\textsc{LD}\xspace}
\newcommand{\prbldor}{\textsc{LDor}\xspace}
\newcommand{\prbldand}{\textsc{LDand}\xspace}
\newcommand{\prbldalpha}{\textsc{LD-$\alpha$}\xspace}
\newcommand{\alggreedyand}{\textsc{GreedyAnd}\xspace}
\newcommand{\alggreedyor}{\textsc{GreedyOr}\xspace}
\newcommand{\alggreedyandalpha}{\textsc{GreedyAnd-$\alpha$}\xspace}
\newcommand{\alggreedyoralpha}{\textsc{GreedyOr-$\alpha$}\xspace}
\newcommand{\algbase}{\textsc{Dense}\xspace}
\newcommand{\indand}[1]{\dispfunc{f_{\mathit{AND}}}{#1}}
\newcommand{\indor}[1]{\dispfunc{f_{\mathit{OR}}}{#1}}
\newcommand{\hull}[1]{\dispfunc{hull}{#1}}
\theoremstyle{thmstyleone}
\newtheorem{theorem}{Theorem}
\newtheorem{proposition}{Proposition}
\newtheorem{problem}{Problem}
\newtheorem{lemma}{Lemma}
\newtheorem{corollary}{Corollary}
\theoremstyle{thmstyletwo}
\theoremstyle{thmstylethree}
\newtheorem{definition}{Definition}
\definecolor{yafcolor1}{rgb}{0.4, 0.165, 0.553}
\definecolor{yafcolor2}{rgb}{0.949, 0.482, 0.216}
\definecolor{yafcolor3}{rgb}{0.47, 0.549, 0.306}
\definecolor{yafcolor4}{rgb}{0.925, 0.165, 0.224}
\definecolor{yafcolor5}{rgb}{0.141, 0.345, 0.643}
\definecolor{yafcolor6}{rgb}{0.965, 0.933, 0.267}
\definecolor{yafcolor7}{rgb}{0.627, 0.118, 0.165}
\definecolor{yafcolor8}{rgb}{0.878, 0.475, 0.686}
\begin{document}

\title{Dense subgraphs induced by edge labels}

\author*[1]{\fnm{Iiro} \sur{Kumpulainen}}\email{iiro.kumpulainen@helsinki.fi}

\author*[1]{\fnm{Nikolaj} \sur{Tatti}}\email{nikolaj.tatti@helsinki.fi}

\affil[1]{HIIT, University of Helsinki, Helsinki}

\abstract{
Finding densely connected groups of nodes in networks is a widely used tool for analysis in graph mining. A popular choice for finding such groups is to find subgraphs with a high average degree.
While useful, interpreting such subgraphs may be difficult.
On the other hand, many real-world networks have additional information, and we are specifically interested in networks with labels on edges.
In this paper, we study finding sets of labels that induce dense subgraphs.
We consider two notions of density: average degree and the number of edges minus the number of nodes weighted by a parameter $\alpha$.
There are many ways to induce a subgraph from a set of labels, and we study two cases:
First, we study conjunctive-induced dense subgraphs, where the subgraph edges need to have all labels. Secondly, we study disjunctive-induced dense subgraphs, where the subgraph edges need to have at least one label.
We show that both problems are \np-hard. Because of the hardness, we resort to greedy heuristics. We show that we can implement the greedy
search efficiently: the respective running times for finding conjunctive-induced and disjunctive-induced dense subgraphs are in $\bigO{p \log k}$ and $\bigO{p \log^2 k}$, where $p$ is the number of edge-label pairs and $k$ is the number of labels.
Our experimental evaluation demonstrates that we can find the ground truth in synthetic graphs and that we can find interpretable subgraphs from real-world networks.
}

\keywords{dense subgraphs, convex hull, label-induced subgraphs}

\maketitle

\section{Introduction}

Finding dense subgraphs in networks is a common tool for analyzing networks with potential applications in diverse domains, such as bioinformatics~\citep{fratkin2006motifcut,langston2005combinatorial},
finance~\citep{du2009migration},
social media~\citep{angel2014dense}, or web graph analysis~\citep{fratkin2006motifcut}.

While useful on their own, analyzing dense subgraphs without any additional explanation may be difficult for domain experts and consequently may limit its usability. 

Fortunately, it is often the case that the network has additional information such as labels associated with nodes and/or edges. For example, in social networks, users may have tags describing themselves. In networks arising from communication, for example, by email or Twitter, the tags associated with the edge can be the tags associated or extracted with the message.

Using the available label information to provide explainable dense subgraphs may ease the burden of domain experts when, for example, studying social networks.
In this paper, we consider finding dense subgraphs in networks with labeled edges. More formally, we are looking for a label set that \emph{induces} a dense subgraph. As a measure of density, a subgraph $(W, F)$ with nodes $W$ and edges $F$ will use $\abs{F} / \abs{W}$, the ratio of edges over the nodes, a popular choice for measuring the density of a subgraph.

We consider two cases: conjunctive-induced and disjunctive-induced dense subgraphs. In the former, the induced subgraph consists of all the edges that have the given label set. In the latter, the induced subgraph consists of all the edges that have at least one label common with the label set. We give an example of both cases in Figure~\ref{fig:example}.

\begin{figure}[hb!]
\hspace*{\fill}
\begin{tikzpicture}
\begin{scope}[yshift=-0.2cm]

\node[circle, inner sep=2pt] (k1) at (0*72+18:0.5) {};
\draw[thick,fill=white, draw=black] (k1) circle (3pt);

\node[circle, inner sep=2pt] (k2) at (1*72+18:0.5) {};
\draw[thick,fill=white, draw=black] (k2) circle (3pt);

\node[circle, inner sep=2pt] (k3) at (2*72+18:0.5) {};
\draw[thick,fill=white, draw=black] (k3) circle (3pt);

\node[circle, inner sep=2pt] (k4) at (3*72+18:0.5) {};
\draw[thick,fill=white, draw=black] (k4) circle (3pt);

\node[circle, inner sep=2pt] (k5) at (4*72+18:0.5) {};
\draw[thick,fill=white, draw=black] (k5) circle (3pt);

\end{scope}

\node[circle, inner sep=2pt, right=0.3cm of k1] (u1) {};
\draw[thick,fill=white, draw=black] (u1) circle (3pt);

\node[circle, inner sep=2pt, above left=0.1 and 0.3 of k3] (v1) {};
\draw[thick,fill=white, draw=black] (v1) circle (3pt);

\node[circle, inner sep=2pt, below left=0.1 and 0.3 of k3] (v2) {};
\draw[thick,fill=white, draw=black] (v2) circle (3pt);

\foreach \x [count=\xi from 2] in {1,...,4}
\foreach \y in {\xi,...,4} {
\draw[ultra thick, yafcolor1!50!white] (k\x) edge (k\y);
\draw[ultra thick, dashed, yafcolor2!50!white] (k\x) edge (k\y);
}
\draw[ultra thick, yafcolor1!50!white] (k5) edge (k1);
\draw[ultra thick, dashed, yafcolor2!50!white] (k5) edge (k1);

\draw[ultra thick, yafcolor2!50!white] (k1) edge (u1);
\draw[ultra thick, yafcolor1!50!white] (k3) edge (v1);
\draw[ultra thick, yafcolor1!50!white] (v1) edge (v2);
\draw[ultra thick, yafcolor1!50!white] (v2) edge (k3);

\node[circle, inner sep=2pt] (l1) at (1.5, 0) {};
\draw[thick,fill=yafcolor2!50!white, draw=yafcolor2] (l1) circle (3pt);
\node[anchor=west, inner sep=6pt] at (l1) {$= \ell_1$};
\node[circle, inner sep=2pt] (l1) at (1.5, -0.4) {};
\draw[thick,fill=yafcolor1!50!white, draw=yafcolor1] (l1) circle (3pt);
\node[anchor=west, inner sep=6pt] at (l1) {$= \ell_2$};

\end{tikzpicture}\hfill
\begin{tikzpicture}

\begin{scope}[yshift=-0.2cm]

\node[circle, inner sep=2pt] (k1) at (0*72+18:0.5) {};
\draw[thick,fill=white, draw=black] (k1) circle (3pt);

\node[circle, inner sep=2pt] (k2) at (1*72+18:0.5) {};
\draw[thick,fill=white, draw=black] (k2) circle (3pt);

\node[circle, inner sep=2pt] (k3) at (2*72+18:0.5) {};
\draw[thick,fill=white, draw=black] (k3) circle (3pt);

\node[circle, inner sep=2pt] (k4) at (3*72+18:0.5) {};
\draw[thick,fill=white, draw=black] (k4) circle (3pt);

\node[circle, inner sep=2pt] (k5) at (4*72+18:0.5) {};
\draw[thick,fill=white, draw=black] (k5) circle (3pt);

\end{scope}

\node[circle, inner sep=2pt, above right=0.3 and 0.2 of k1] (u1) {};
\draw[thick,fill=white, draw=black] (u1) circle (3pt);

\node[circle, inner sep=2pt, right=0.3 of k1] (u2) {};
\draw[thick,fill=white, draw=black] (u2) circle (3pt);

\node[circle, inner sep=2pt, below right=0.3 and 0.2 of k1] (u3) {};
\draw[thick,fill=white, draw=black] (u3) circle (3pt);

\draw[ultra thick, yafcolor1!50!white] (k1) edge (k2);
\draw[ultra thick, dashed, yafcolor2!50!white] (k1) edge (k2);
\draw[ultra thick, yafcolor1!50!white] (k2) edge (k3);
\draw[ultra thick, yafcolor2!50!white] (k3) edge (k4);
\draw[ultra thick, yafcolor2!50!white] (k4) edge (k2);
\draw[ultra thick, yafcolor2!50!white] (k4) edge (k5);
\draw[ultra thick, yafcolor1!50!white] (k4) edge (k1);
\draw[ultra thick, yafcolor1!50!white] (k2) edge (k5);
\draw[ultra thick, yafcolor3!50!white] (k5) edge (k1);

\draw[ultra thick, yafcolor3!50!white] (k1) edge (u1);
\draw[ultra thick, yafcolor3!50!white] (k1) edge (u2);
\draw[ultra thick, yafcolor3!50!white] (u2) edge (u3);

\node[circle, inner sep=2pt] (l1) at (1.5, 0.3) {};
\draw[thick,fill=yafcolor2!50!white, draw=yafcolor2] (l1) circle (3pt);
\node[anchor=west, inner sep=6pt] at (l1) {$= \ell_1$};
\node[circle, inner sep=2pt] (l1) at (1.5, -0.1) {};
\draw[thick,fill=yafcolor1!50!white, draw=yafcolor1] (l1) circle (3pt);
\node[anchor=west, inner sep=6pt] at (l1) {$= \ell_2$};
\node[circle, inner sep=2pt] (l1) at (1.5, -0.5) {};
\draw[thick,fill=yafcolor3!50!white, draw=yafcolor3] (l1) circle (3pt);
\node[anchor=west, inner sep=6pt] at (l1) {$= \ell_3$};

\end{tikzpicture}
\hspace*{\fill}

\caption{Example graphs with labels on the edges. Edge labels are indicated by colors; dashed edges indicate edges with 2 labels. Left figure: Label $\ell_1$ induces a subgraph with 6 nodes and 9 edges, and $\ell_2$ induces a subgraph with 7 nodes and 11 edges, while the conjunction of $\ell_1$ and $\ell_2$ induces a subgraph with 5 nodes and 8 edges resulting in the highest density of $8/5=1.6$. Right figure: Labels $\ell_1$, $\ell_2$, and $\ell_3$ each induce subgraphs with 5 nodes and 4 edges, while the disjunction of $\ell_1$ and $\ell_2$ induces a subgraph with 5 nodes and 7 edges, resulting in a density of $7/5=1.4$.}
\label{fig:example}
\end{figure}
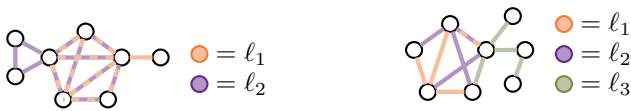

Finding the densest subgraph---with no label constraints---can be done in polynomial time~\citep{Goldberg:1984up} and can be 2-approximated in linear time~\citep{Charikar:2000tg}. Unfortunately, additional requirements on the labels
will make solving the optimization problem exactly computationally intractable:
we show that both problems are \np-hard, which forces us to resort to heuristics. We propose a greedy algorithm for both problems: we start with an empty label set and keep adding the best possible label until no additions are possible. We then return the best observed induced subgraph.

The computational bottleneck of the greedy method is selecting a new label. If done naively,
evaluating a single label candidate requires enumerating over all the edges. Since this needs to be done for every candidate during every addition, the running time is $\bigO{p \abs{L}}$, where $\abs{L}$ is the number of labels and $p$ is the number of edge-label pairs in the network. 
By keeping certain counters we can speed up the running time. We show that conjunctive-induced
graphs can be discovered in $\bigO{p \log \abs{L}}$ time using a balanced search tree,
and that disjunctive-induced
graphs can be discovered in $\bigO{p \log^2 \abs{L}}$ time with the aid of an algorithm originally used to maintain convex hulls.

This is an extended version of our previously published conference paper~\citep{kumpulainen2022community}. We extend our earlier work by considering an alternative definition of density: namely, we search for label-induced subgraphs $(W, F)$ with high $\alpha$-density $\abs{F} - \alpha \abs{W}$.
This density is closely related to the problem of finding a subgraph with maximum density~\citep{Goldberg:1984up} but also has been used to decompose 
graphs~\citep{tatti2019density,danisch2017large}. 
We show that there are $\alpha$ such that the $\alpha$-densest label-induced graph also has the highest density. We then modify the greedy algorithms to find subgraphs with high $\alpha$-density in $\bigO{p \log \abs{L}}$ for both the conjunctive and disjunctive cases.

The remainder of the paper is organized as follows.
In Section~\ref{sec:prel} we introduce the notation and formalize the optimization problem. In Sections~\ref{sec:con}--\ref{sec:dis} we present our algorithms. 
In Section~\ref{sec:alpha}, we analyze the case of using an alternative density metric and adapt the previous algorithms to this problem.
Section~\ref{sec:related} is devoted to the related work. Finally, we present the experimental evaluation in Section~\ref{sec:exp} and conclude with a discussion in Section~\ref{sec:conclusions}. The computational complexity proofs are given in Appendix~\ref{sec:proofs}.

\section{Preliminary notation and problem definition}\label{sec:prel}

In this section, we first describe the common notation and then introduce
the formal definition of our problem.

Assume that we are given an \emph{edge-labeled graph}, that is, a tuple $G=(V,E,\lab)$, where $V$ is the set of vertices, $E \subseteq \{(x,y) \mid (x, y) \in V^2, x \neq y\}$ is the set of undirected edges, and $\lab : E \rightarrow 2^L$ is a function that maps each edge $e \in E$ to the set of labels $\lab(e)$. Here $L$ is the set of all possible labels.

Given a label $\ell \in L$, let us write $E(\ell)$ to
be the edges having the label $\ell$. In addition, let us write $V(\ell)$ to be the nodes adjacent to $E(\ell)$.

Our goal is to search for dense regions of graphs that can be explained using the labels. In other words, we are looking for a set of labels that induce a dense graph. More formally, we define an \emph{inducing} function to be a function $f$ that maps two sets of labels to a binary number. An example of such a function could be $f(A; B) = [B \subseteq A]$ which returns 1 if and only if $B$ is a subset of $A$.

Given a set of labels $B \subseteq L$, an inducing function $f$, and a graph $G$, we define the \emph{label-induced subgraph} $H=G(f, B)$ as $(V(B), E(B), \lab)$, where 
\[
    E(B) = \{e \in E \mid f(\lab(e); B) = 1\}
\]
is the subset of edges that satisfy $f$, and $V(B)$ is the set of vertices that are adjacent to $E(B)$.

Given a graph $G$ with vertices $V$ and edges $E$, we measure the \emph{density} of the graph $d(G)$ as the number of edges divided by the number of vertices: $d(G) = \frac{|E|}{|V|}$.

We should point out that there are alternative choices for a notion of density. For example, one option is to consider a fraction of edges $|E|/{|V| \choose 2}$. However, this measure is however problematic since a single edge will yield a maximum value. Consequently, either a size needs to be incorporated into the objective, which leads to discovering maximum cliques---an \np-hard problem with bad approximation guarantees~\citep{DBLP:conf/focs/Hastad96}, or enumerating all pseudo-cliques with an exponential-time algorithm~\citep{Uno:2010:EAS:1712671.1712672,abello02clique}. On the other hand, finding graph with maximum $d(G)$ can be done in polynomial time~\citep{Goldberg:1984up}, and 2-approximated in linear time~\citep{Charikar:2000tg}. See related work for additional discussion.

We are now ready to state our generic problem.

\begin{problem}[\prbld]
Let $G=(V,E,\lab)$ be an edge-labeled graph over a set of labels $L$ with multiple labels being possible for each edge. Assume an inducing function $f$. Find a set of labels $L^*$ such that the density $d(H)$ of the label-induced subgraph $H=G(f, L^*)$ is maximized.
\end{problem}

We consider two special cases of \prbld. Firstly, let us define $\indand{A; B} = [B \subseteq A]$, that is, the induced edges need to contain every label in $B$. We will denote the problem \prbld paired with $\indand{}$ as \prbldand. Secondly, we define $\indor{A; B} = [B \cap A \neq \emptyset]$, that is, the induced edges need to have one common label with $B$. Then, we denote the corresponding problem as \prbldor. In other words, \prbldand is the problem of finding dense conjunctive-induced subgraphs, and \prbldor is the problem of finding disjunctive-induced subgraphs.

In addition, we consider an alternative measure to the density $d(G)$ of a graph by instead measuring the \emph{$\alpha$-density} of the graph $g(G; \alpha)$ as the number of edges minus $\alpha$ times the number of vertices: $g(G; \alpha) = |E| - \alpha|V|$. This measure is closely related to finding the densest subgraph: \citet{Goldberg:1984up} finds a series of $\alpha$-densest subgraphs when searching for the densest subgraph. However, this measure has been also studied on its own as it can be used to decompose graphs~\citep{tatti2019density,danisch2017large}.
Our optimization problem is as follows.

\begin{problem}[\prbldalpha]
Let $G=(V,E,\lab)$ be an edge-labeled graph over a set of labels $L$ with multiple labels being possible for each edge. Assume an inducing function $f$ and a constant $\alpha \in \mathbb{R}$. Find a set of labels $L^*$ such that the $\alpha$-density $g(H; \alpha)$ of the label-induced subgraph $H=G(f, L^*)$ is maximized.
\end{problem}
\section{Finding dense conjunctive-induced graphs}\label{sec:con}

In this section, we focus on \prbldand, that is, finding conjunctive-induced graphs that are dense. We will first prove that \prbldand is \np-hard.

\begin{theorem}
\prbldand is \np-hard.
\label{thr:npand}
\end{theorem}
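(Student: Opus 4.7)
The plan is to prove \np-hardness by a polynomial-time reduction from $k$-\textsc{Clique}, whose combinatorial requirement that every pair of selected vertices be mutually adjacent mirrors the conjunctive structure of $\indand{}$. Given an instance $(G,k)$ of $k$-\textsc{Clique} with $G=(V,E)$, I would construct an edge-labeled graph $G'=(V,E,\lab)$ on the same vertex and edge sets, over the label alphabet $L=V$, and assign to each edge $(u,v)\in E$ the label set
\[
\lab((u,v)) = \{u,v\} \cup (N_G(u)\cap N_G(v)).
\]
The governing observation is that for $C\subseteq V$, the containment $C\subseteq \lab((u,v))$ holds exactly when every vertex of $C$ is either an endpoint of $(u,v)$ or a common $G$-neighbor of $u$ and $v$. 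If $C$ is a $k$-clique in $G$ and $(u,v)\in E[C]$, then the clique property yields $C\setminus\{u,v\}\subseteq N_G(u)\cap N_G(v)$, hence $C\subseteq \lab((u,v))$. Choosing $B=C$ therefore guarantees $E[C]\subseteq E(B)$, so the label-induced subgraph has density at least $(k-1)/2$.

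The backward direction is the main obstacle, because two issues conspire to inflate densities without a corresponding clique: the trivial choice $B=\emptyset$ always returns the full graph $G'$, and a dense non-clique subgraph of $G$ (for example a bipartite clique) could in principle produce high density under a non-trivial $B$. To neutralize the first issue, I would augment the construction with a sufficiently large collection of pendant edges carrying the empty label set, so that they appear in $E(\emptyset)$ but disappear as soon as any label is selected; with enough such edges, the $B=\emptyset$ density falls strictly below $(k-1)/2$ for every $k\geq 3$. To neutralize the second issue, I would rely on a structural lemma: whenever $B$ contains a non-adjacent pair $\{w,w'\}$ of $G$, every edge in $E(B)$ must have both endpoints in the common neighborhood $N_G(w)\cap N_G(w')$, since otherwise the endpoints could not simultaneously absorb both $w$ and $w'$ through the $\{u,v\}\cup(N_G(u)\cap N_G(v))$ template.

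Combining these pieces, the backward direction reduces to showing that any non-empty $B$ attaining density $(k-1)/2$ must itself span a clique of size $\geq k$ in $G$, from which the clique is recovered. Establishing this structural lemma---controlling what extraneous vertices and edges can appear in $V(B)$ and $E(B)$ beyond the clique itself---is the technically most delicate step. An alternative route worth considering is a reduction from \textsc{Balanced Biclique}, whose bipartite structure meshes naturally with the intersection semantics of $\indand{}$ and may sidestep some of the gadget engineering required by the clique-based construction.
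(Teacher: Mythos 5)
Your reduction breaks in the backward direction, and not merely at the step you flag as delicate: the claim that any non-empty $B$ attaining density $(k-1)/2$ must span a $k$-clique is false for your construction. Take $B=\{w\}$ a singleton. Then $E(B)$ consists of the edges incident to $w$ together with all edges both of whose endpoints are adjacent to $w$, i.e., the subgraph induced by the closed neighborhood of $w$. If $G$ is a complete bipartite graph $K_{n,n}$ with one additional universal vertex $w$, this subgraph is all of $G$ and has density $(n^2+2n)/(2n+1)\approx n/2$, while the largest clique in $G$ has size $3$; so for any $k\ge 4$ and moderately large $n$ the reduction answers ``yes'' on a ``no'' instance. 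Your structural lemma only constrains label sets $B$ containing a \emph{non-adjacent} pair, so it says nothing about singletons (or about any $B$ that happens to be a small clique) and cannot close this gap. The difficulty is intrinsic to the approach: the unlabeled densest-subgraph problem is polynomial and dense subgraphs need not contain large cliques, so a reduction from $k$-\textsc{Clique} that encodes the answer as a density threshold must prevent \emph{every} other label set from reaching that threshold --- which the assignment $\lab((u,v))=\{u,v\}\cup(N_G(u)\cap N_G(v))$ does not do. (The pendant-edge fix for $B=\emptyset$ is fine but orthogonal; the \textsc{Balanced Biclique} remark is not developed enough to assess.)

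For comparison, the paper reduces from \textsc{3ExactCover}: labels correspond to the sets $C_i$, choosing a label set $A$ amounts to discarding the complementary labels $B=L\setminus A$, a gadget of $4N$ fresh vertices and $2N$ fresh edges per overlapping pair forces the density below $3/5$ whenever the retained sets are not pairwise disjoint, and in the disjoint case the density equals $\abs{B}/(\abs{B}+1)$, which reaches $\abs{X}/(\abs{X}+3)$ exactly when the sets form an exact cover. The hardness there lives entirely in the label-selection combinatorics (disjointness and covering), not in locating a dense region of the underlying graph. To salvage a clique-style argument you would need gadgets that penalize every $B$ whose induced edge set is not a clique on $B$ --- which is essentially the role the overlap gadget plays in the paper's construction.
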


\begin{proof}
The proof is given in Appendix~\ref{sec:proofs}.
\end{proof}

The \np-hardness forces us to resort to heuristics. Here, we use the algorithm
for 2-approximating dense subgraphs~\citep{Charikar:2000tg} as a starting point. The algorithm iteratively removes a node with the smallest degree, and returns the best solution among the observed subgraphs. We propose a similar greedy algorithm, where we greedily
add the best possible label, and repeat until the induced subgraph is empty.
We then select the best observed labels as the output.

To avoid enumerating over the edges every time we look for a new label, we maintain several counters. Let $A$ be the current set of labels. For each label, we maintain the number of nodes $n_k$ and edges $m_k$ of the candidate graph, that is,
$n_k = \abs{V(A \cup \set{k})}$ and
$m_k = \abs{E(A \cup \set{k})}$. We store the densities $m_k / n_k$ in a balanced search tree (for example, a red-black tree), which allows us to obtain the largest
element quickly. Once we update set $A$, we also update the counters and update the search tree. Maintaining the node counts $n_k$ requires us to maintain the 
counters $r_{v, k}$,  number of edges labeled as $k$ adjacent to $v$: once the counter reduces to 0, we reduce $n_k$ by 1. The pseudo-code of the algorithm is given in Algorithm~\ref{alg:greedy_conjunctive}.

\begin{algorithm}[t!]
\label{alg:greedy_conjunctive}
\caption{\alggreedyand, greedy search for the conjunctive-induced dense subgraphs}
$n_\ell \gets \abs{V(\ell)}$, for each label $\ell \in L$\;
$m_\ell \gets \abs{E(\ell)}$, for each label $\ell \in L$\;
$r_{v,\ell} \gets \abs{\set{e \in E(\ell) \mid e \text{ is adjacent to } v}}$, for each vertex $v$ and label $\ell$\;
$T \gets$ labels sorted by the density values $\frac{m_\ell}{n_\ell}$ (e.g., in a red-black tree)\; \label{lst:line:condensity}
$A_0 \gets \emptyset$ and $i \gets 0$\;
\While{there are labels} {
pick and remove label $k$ that has the maximum density in $T$\;
$A_{i+1} \gets A_i \cup \{k\}$\;
\For{each edge $e$ without label $k$} { \nllabel{conjunctive_for1}
    \For{each label $\ell$ of edge $e = (u, v)$} { \nllabel{conjunctive_for2}
        $m_\ell \gets m_\ell - 1$\;
        $r_{v,\ell} \gets r_{v,\ell}-1$;
        $r_{u,\ell} \gets r_{u,\ell}-1$\;
        \lIf {$r_{v,\ell} = 0$} {$n_\ell \gets n_\ell-1$} 
        \lIf {$r_{u,\ell} = 0$} {$n_\ell \gets n_\ell-1$} 
    }
    remove edge $e$\;
}
update $T$ for all labels $\ell$ with changed values of $m_\ell$ or $n_\ell$\;
$i \gets i+1$\;
}
\Return the set of labels $A_i$ that yields the highest density\;
\end{algorithm}

We conclude with an analysis of the computational complexity of \alggreedyand.

\begin{theorem}
\label{thr:contime}
\alggreedyand runs in $\bigO{p\log{\abs{L}}+\abs{V}+\abs{E}}$ time, where
$p$ is the number of edge-label pairs $p=\abs{\{(e,k)\mid e\in E, k\in \lab(e)\}}$.
\end{theorem}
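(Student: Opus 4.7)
The plan is to bound the running time by decomposing it into four parts: (i)~setting up the counters $n_\ell$, $m_\ell$, $r_{v,\ell}$ and the balanced tree $T$; (ii)~the work performed inside the inner for loops at lines~\ref{conjunctive_for1}--\ref{conjunctive_for2} whenever an edge is removed; (iii)~the cost of maintaining $T$, namely extractions of the maximum and repositionings triggered by counter changes; and (iv)~the cost of locating, at each outer iteration, the active edges that are missing the newly chosen label $k$.

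For~(i), I would use a single pass over $V$ for per-vertex setup in $\bigO{\abs{V}}$, followed by a pass over $E$ that, for each edge, walks its label list to initialize $m_\ell$ and $r_{v,\ell}$ for both endpoints, contributing $\bigO{p + \abs{E}}$; the $n_\ell$ values can then be read off from the nonzero $r_{v,\ell}$ counters in $\bigO{p}$. Inserting the $\abs{L}$ initial densities into $T$ costs $\bigO{\abs{L}\log\abs{L}} \le \bigO{p\log\abs{L}}$. For~(ii), I will use the standard amortization that each edge is removed at most once over the entire execution, since the active set can only shrink once a label has been fixed. The inner loops fired when edge $e$ is removed cost $\bigO{\abs{\lab(e)}}$, so their total cost is $\bigO{\sum_{e \in E} \abs{\lab(e)}} = \bigO{p}$. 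For~(iii), every change to an $m_\ell$ or $n_\ell$ performed inside those inner loops triggers one $\bigO{\log\abs{L}}$ operation on $T$, and by~(ii) there are only $\bigO{p}$ such changes; the $\bigO{\abs{L}}$ extractions of the maximum add another $\bigO{\abs{L}\log\abs{L}}$ term. Altogether, maintaining $T$ costs $\bigO{p\log\abs{L}}$.

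The step I expect to require most care is~(iv), because the outer-most loop iterates over the active edges that are missing label $k$, and on iterations where few edges are removed this scan is not covered by the amortized argument in~(ii). I would maintain the active edges in a doubly-linked list, walk the list once per outer iteration, and use an $\bigO{1}$ lookup into $\lab(e)$ to decide whether to retain $e$; the iteration-$i$ cost is then $\bigO{a_i}$, where $a_i$ is the number of active edges at the start of that iteration. The key claim is that each edge can be active for at most $\abs{\lab(e)} + 1$ iterations: if $e$ is still present at the start of iteration~$i$, then every label picked so far must belong to $\lab(e)$, so $i - 1 \le \abs{\lab(e)}$. Summing over edges yields $\sum_i a_i \le \sum_{e \in E}(\abs{\lab(e)} + 1) = p + \abs{E}$, which also absorbs the cost of checking whether to update $n_\ell$ based on $r_{v,\ell}$ hitting zero.

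Combining (i)--(iv) gives a total running time of $\bigO{\abs{V} + \abs{E} + p + p\log\abs{L}} = \bigO{p\log\abs{L} + \abs{V} + \abs{E}}$, as claimed.
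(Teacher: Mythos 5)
Your proof is correct and follows essentially the same route as the paper's: amortize the inner-loop work over the at most $p$ edge-label pairs, charge each of the $\bigO{p}$ counter changes one $\bigO{\log \abs{L}}$ update of $T$, and add the initialization and the $\abs{L}$ max-extractions. The one place you go beyond the paper's argument is part~(iv): the paper's proof only counts work on edges that are actually deleted and is silent about the cost of locating, in each round, the active edges missing the new label $k$; your observation that an edge alive at the start of round $i$ must carry all $i-1$ previously chosen labels, so it is scanned at most $\abs{\lab(e)}+1$ times and $\sum_i a_i \leq p + \abs{E}$, is exactly the detail needed to make that step rigorous, and it is correct.
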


\begin{proof}
Initializing counters in \alggreedyand can be done
in $\bigO{\abs{V} + \abs{E} + \abs{L}}$ time while
initializing the tree can be done in $\bigO{\abs{L} \log \abs{L}}$ time.

Let us consider the inner for-loop. Since an edge is deleted once it is processed, the inner for-loop is executed at most
$p$ times during the search. Since this is the only way the counters get updated, the tree $T$ is updated $p$ times,
each update requiring $\bigO{\log \abs{L}}$ time.

The outer loop is executed at most $\abs{L}$ times. During each round, selecting and removing the label requires $\bigO{\log \abs{L}}$ time.

In summary, the algorithm requires 
\[
\bigO{\abs{V} + \abs{E} + \abs{L} + \abs{L} \log \abs{L} + p \log \abs{L}} \subseteq
\bigO{\abs{V} + \abs{E} + p \log \abs{L}}
\]
time, completing the proof.
\end{proof}
\section{Finding dense disjunctive-induced graphs}\label{sec:dis}

In this section, we focus on \prbldor, that is, finding disjunctive-induced graphs that are dense. We will first prove that \prbldor is \np-hard.

\begin{theorem}
\prbldor is \np-hard.
\label{thr:npor}
\end{theorem}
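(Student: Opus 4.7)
The plan is to prove NP-hardness by polynomial-time reduction from a known NP-hard problem. Membership in \np\ is immediate: given any candidate label set $B$, we can construct the induced subgraph $H = G(\indor{}, B)$, compute its density $d(H)$, and compare it to a rational threshold, all in polynomial time. So the effort will go into the hardness direction.

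For the reduction, I would start from a classical NP-hard problem, a natural candidate being \textsc{Maximum Edge Biclique} or \textsc{Densest-}$k$\textsc{-Subgraph}, both of which have decision versions that are \np-hard. The construction associates each vertex $v$ of the source graph with a label $\ell_v$ and builds an edge-labeled graph $G'$ as follows. For each source vertex $v$, attach a ``bulk gadget'' in $G'$ (for instance, a large star or clique) whose edges all carry label $\ell_v$; this fixes the number of vertices and edges each label contributes when it is selected. For each source edge, add a structural connection between the two corresponding bulk gadgets---either an auxiliary edge labeled with both endpoints' labels, or a shared vertex that merges them---calibrated so that for $B = \set{\ell_v \mid v \in S}$ the density of $G(\indor{}, B)$ becomes a strictly monotone function of the quantity of interest in the source problem (the biclique size, $\abs{E[S]}$, or the number of edges touching $S$). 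A suitable density threshold then encodes the YES/NO answer of the source instance.

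The main obstacle is the disjunctive semantics itself: under $\indor{}$, activating a single label immediately brings in all edges carrying it, so naive constructions are easily dominated by trivial choices such as ``pick every label'' or ``pick the single label attached to the densest gadget''. The delicate part is sizing the bulk gadgets so that adding an extraneous label inflates $\abs{V(B)}$ enough to strictly decrease the density, while label sets $B$ that correspond to the desired combinatorial structure in the source instance still reach the target density. Verifying this density ordering uniformly over all candidate $B$---not just those of the form $\set{\ell_v \mid v \in S}$, but also mixed sets that combine ``intended'' labels with additional ones---is where the technical weight of the appendix proof will sit.
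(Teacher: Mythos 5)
Your submission is a reduction \emph{plan}, not a proof: the step you yourself identify as ``where the technical weight will sit''---choosing the gadget sizes and verifying the density ordering uniformly over all candidate label sets $B$---is exactly the content that a hardness proof must supply, and it is absent. Without concrete gadget parameters and a concrete threshold, there is no way to check that the intended label sets beat the threshold while all others (including the ``pick every label'' and mixed sets you worry about) fall below it. Beyond the incompleteness, your first proposed edge gadget has a semantic problem under $\indor{}$: if the auxiliary edge for a source edge $(u,v)$ carries both labels $\ell_u$ and $\ell_v$, it is included whenever \emph{either} label is selected, so for $B=\set{\ell_v \mid v\in S}$ you count edges \emph{incident} to $S$, not edges \emph{inside} $S$. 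That does not encode \textsc{Densest-$k$-Subgraph} or \textsc{Maximum Edge Biclique}; it encodes a coverage-type quantity, and the ratio objective with no cardinality constraint on $B$ makes it unclear that any hard threshold survives. The ``shared vertex'' variant changes $\abs{V(B)}$ instead and needs its own (unsupplied) analysis.

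The paper avoids these difficulties by reducing from \textsc{3ExactCover}. Each set $C_i$ gets a label $i$; each element $x\in X$ is joined to two hub vertices $z_1,z_2$ by edges labeled $\set{i \mid x\in C_i}$, and each $y_i$ is joined to $z_1$ by an edge labeled $i$. The point of this construction is that the density of $G(\indor{},A)$ depends on $A$ only through two scalars---the size $k$ of the union of the selected sets and $\abs{A}$ itself---namely $d = (2k+\abs{A})/(2+k+\abs{A})$, which is decreasing in $\abs{A}$ for fixed $k\geq 3$; combined with $\abs{A}\geq k/3$ this yields the clean threshold $7\abs{X}/(6+4\abs{X})$, attained exactly by exact covers. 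Collapsing the quality of an arbitrary $B$ to two monotone parameters is what makes the ``uniform over all $B$'' verification go through; if you want to salvage your approach, you should look for a source problem and gadget with that same property rather than \textsc{Densest-$k$-Subgraph}.
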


\begin{proof}
The proof is given in Appendix~\ref{sec:proofs}.
\end{proof}

Similar to \prbldand, we resort to a greedy search to find good subgraphs: We start with an empty label set, and iteratively add the best possible label. Once done, we return the best observed label set.

However, we maintain a different set of counters as compared to \alggreedyand.
The reason for having different counters is to avoid a significantly higher number of updates: the inner loop would need to go over the edge-label pairs that are \emph{not} present in the graph. More formally, we maintain values $n$ and $m$ representing the number of nodes and edges in the subgraph induced by the current set of labels, say $A$. We also maintain $n_k$ and $m_k$,
the number of \emph{additional} nodes and edges if $k$ is added to $A$. At each iteration, we select the label optimizing $\frac{m + m_k}{n + n_k}$. We will discuss the selection process later. Once the label is selected, we update the counters $m_k$ and $n_k$. To maintain $n_k$ properly, we keep track of what nodes are already in $V(A)$, using an indicator $r_v$ with $r_v = 1$ if $v \in V(A)$. The pseudo-code for the algorithm is given in Algorithm~\ref{alg:greedy_disjunctive}.

\begin{algorithm}[t!]
\label{alg:greedy_disjunctive}
\caption{\alggreedyor, greedy search for the disjunctive-induced dense subgraphs}
$n \gets 0$;
$m \gets 0$\;
$n_\ell \gets \abs{V(\ell)}$, for each label $\ell \in L$\;
$m_\ell \gets \abs{E(\ell)}$, for each label $\ell \in L$\;
$S_{v} \gets \set{\ell \in L \mid \text{there is an edge with label $\ell$ adjacent to $v$}}$\;
$r_{v} \gets 0$, for each vertex $v$\;
$A_0 \gets \emptyset$ and $i \gets 0$\;
\While{there are labels} {
pick and remove label $k$ that yields the maximum density $\frac{m+m_k}{n+n_k}$\;
$A_{i+1} \gets A_i \cup \{k\}$\;
\For{each edge $e = (u, v)$ with label $k$} {
    \lFor{each label $\ell$ of edge $e = (u, v)$} {
        $m_\ell \gets m_\ell - 1$
    }
    $m \gets m + 1$\;
    \If {$r_{v} = 0$} {
        \lFor{each label $\ell$ in $S_{v}$} {
            $n_\ell \gets n_\ell-1$
        }
        $n \gets n+1$\;
    } 
    \If {$r_{u} = 0$} {
        \lFor{each label $\ell$ in $S_{u}$} {
            $n_\ell \gets n_\ell-1$
        }
        $n \gets n+1$\;
    } 
    $r_{v} \gets 1$;
    $r_{u} \gets 1$\;
    remove edge $e$\;
}
$i \gets i+1$\;
}
\Return the set of labels $A_i$ that yields the highest density\;
\end{algorithm}

During each iteration, we need to select the label maximizing $\frac{m+m_k}{n+n_k}$.
We cannot use priority queues any longer since $n$ and $m$ change every iteration.
However, we can speed up the selection using a convex hull, a classic concept from computational geometry, see for example,~\citep{Li2011}. First, let us formally define a lower-right convex hull.

\begin{definition}
Given a set of points $X = \set{(x_i, y_i)}$ in a plane,
we define a \emph{lower-right convex hull} $H  = \hull{H}$ to
be a subset of $X$ such that $q = (x_q, y_q) \in X$ is \emph{not}
in $X$ if and only if there is a point $r = (x_r, y_r) \in H$ such that
$x_q \leq x_r$ and $y_q \geq y_r$, or if there are two points $p, r \in H$
such that $q$ is above or at the segment joining $q$ and $r$.
\end{definition}

If we were to plot $X$ on a plane, then $\hull{X}$ is the lower-right portion of
the complete convex hull, that is, a set of points in $X$ that form a convex polygon
containing $X$. For notational simplicity, we will refer to $\hull{X}$
as the convex hull. Note that if we order the points in $\hull{X}$ by their $x$-coordinates, then
the $y$-coordinates and the slopes of the intermediate segments are also increasing.

We will first argue that we only need to search the convex hull when looking for the optimal label.

\begin{theorem}
Let $X$ be a set of positive points ${(m_i, n_i)}$, and let $H = \hull{X}$ be the convex hull. Select $m, n \geq 0$.
Then $\max_{p \in X} \frac{m+m_i}{n+n_i} = \max_{p \in H} \frac{m+m_i}{n+n_i}$.
\label{thr:optimal_on_convex_hull}
\end{theorem}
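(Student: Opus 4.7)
The inequality $\max_{p\in X}\frac{m+m_i}{n+n_i}\ge\max_{p\in H}\frac{m+m_i}{n+n_i}$ is immediate because $H\subseteq X$, so the plan is to prove the reverse inequality: every $q=(m_q,n_q)\in X\setminus H$ is dominated, in ratio, by some point or pair of points in $H$. The case split is dictated by the definition of the lower-right hull itself: either (i) there exists $r=(m_r,n_r)\in H$ with $m_q\le m_r$ and $n_q\ge n_r$, or (ii) there exist $p_1,p_2\in H$ such that $q$ lies weakly above the segment joining them.

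Case (i) is immediate. Since $m,n\ge 0$ and all coordinates involved are positive, we have $m+m_q\le m+m_r$ and $n+n_q\ge n+n_r>0$, hence $\frac{m+m_q}{n+n_q}\le\frac{m+m_r}{n+n_r}$, with the right-hand side attained at a point of $H$.

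Case (ii) is the main technical step. I would translate the geometric hypothesis into the algebraic statement that there exists $\lambda\in[0,1]$ such that $m_q=\lambda m_{p_1}+(1-\lambda)m_{p_2}$ and $n_q\ge\lambda n_{p_1}+(1-\lambda)n_{p_2}$. Adding $m$ and $n$ to the appropriate sides, and using that all denominators are strictly positive, we obtain
\[
\frac{m+m_q}{n+n_q}\le\frac{\lambda(m+m_{p_1})+(1-\lambda)(m+m_{p_2})}{\lambda(n+n_{p_1})+(1-\lambda)(n+n_{p_2})}.
\]
The weighted mediant inequality $\frac{\lambda a_1+(1-\lambda)a_2}{\lambda b_1+(1-\lambda)b_2}\le\max(a_1/b_1,a_2/b_2)$, valid for positive $b_1,b_2$, then bounds the right-hand side by $\max\bigl(\tfrac{m+m_{p_1}}{n+n_{p_1}},\tfrac{m+m_{p_2}}{n+n_{p_2}}\bigr)$, each of whose arguments is indexed by a point of $H$. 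This completes the ``for every $q\in X\setminus H$'' step.

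The main obstacle is cleanly executing case (ii): extracting the affine-combination form from the ``above the segment'' condition and ensuring denominators are strictly positive so that the mediant inequality applies without edge cases. The hypothesis that $X$ consists of positive points, together with $m,n\ge 0$, supplies exactly what is needed, so once these observations are recorded the proof reduces to the two short cases above.
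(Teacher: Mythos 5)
Your proof is correct, and while it follows the same skeleton as the paper's (the two cases supplied by the definition of the lower-right hull, with the domination case handled identically), the key segment case is argued by a genuinely different and cleaner route. The paper fixes an \emph{optimal} point $k \notin H$, brackets it between consecutive hull points $p$ and $q$, derives a chain of slope inequalities, inverts them (which requires first verifying $n_p < n_k < n_q$ and $m_p < m_k < m_q$ so that all slopes are positive), and then runs an algebraic argument with auxiliary quantities $x_1, x_2$ to transfer optimality from $k$ to $q$. You instead prove the stronger, optimality-free statement that \emph{every} non-hull point $q$ satisfies
\[
\frac{m+m_q}{n+n_q} \le \max\left(\frac{m+m_{p_1}}{n+n_{p_1}}, \frac{m+m_{p_2}}{n+n_{p_2}}\right),
\]
by writing $(m_q,n_q)$ as lying on or above a convex combination of two hull points and invoking the weighted mediant inequality; positivity of the points together with $m,n\ge 0$ keeps every denominator strictly positive, so no sign analysis of slopes is needed. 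Your version is shorter and avoids the reciprocal-of-slopes step, which is the most delicate part of the paper's argument; the paper's version has the incidental benefit of already manipulating the quantities that become $s(p,q)$ and $b(p,q)$ in the subsequent binary-search lemmas. The only loose end in your write-up is the degenerate case $m_{p_1}=m_{p_2}$ (a vertical segment), where the affine-combination form of $m_q$ is unavailable; but such a $q$ is then dominated by the lower endpoint and falls under your case (i), so nothing is lost.
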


\begin{proof}
Let $k = (m_k, n_k)$ be the optimal point in $X$.
Assume that $k \notin H$.
Assume that there is a point $q = (m_q, n_q)$ in $H$
such that $m_q \geq m_k$ and $n_q \leq n_k$. Then $\frac{m+m_k}{n+n_k} \leq \frac{m+m_q}{n+n_q}$, so the point $q$ is also optimal. 

Assume there is no such point $q$.
Then, the $x$-coordinate of point $k$ falls between two consecutive points $p$ and $q$ in $H$, that is, $m_p < m_k < m_q$. Then $k$ must be above the segment between $p$ and $q$ as otherwise, $k$ would also be a part $H$. Therefore, the slope for the segment between $p$ and $k$ must be greater than the slope of the segment between $p$ and $q$, and the slope for the segment between $k$ and $q$ must be smaller,  
\begin{equation}
\frac{n_q-n_k}{m_q-m_k} \leq \frac{n_q-n_p}{m_q-m_p} \leq \frac{n_k-n_p}{m_k-m_p}\label{slopes}.
\end{equation}

Furthermore, since $k \notin H$, we must have $n_k > n_p$.
By assumption, we also have $n_k < n_q$. In summary, we have $n_p < n_k < n_q$ and $m_p < m_k < m_q$, which means that the slopes in Equation~\ref{slopes} are all positive. By taking the reciprocals this then gives,

\begin{equation}
\frac{m_q-m_k}{n_q-n_k} \geq \frac{m_q-m_p}{n_q-n_p} \geq \frac{m_k-m_p}{n_k-n_p}.
\label{inverse_slopes}
\end{equation}

Denote then the objective value at point $k$ by $c = \frac{m+m_k}{n+n_k}$.
Let $x_1 = c(n+n_p) - m$.
Then, the optimality of $k$ implies $\frac{m+x_1}{n+n_p} = c \geq \frac{m+m_p}{n+n_p}$, which means $x_1 \geq m_p$.
The definition of $c$ leads to $m = c(n+n_k) - m_k$, which in turns leads
to $x_1 = c(n_p-n_k) + m_k$. Solving for $c$ we get $c = \frac{m_k-x_1}{n_k-n_p}$.
Substituting $x_1 \geq m_p$ yields $c \leq \frac{m_k-m_p}{n_k-n_p}$, using Equation~\ref{inverse_slopes} then yields $c \leq \frac{m_q-m_k}{n_q-n_k}$.

Next, let $x_2 = c(n_q-n_k) + m_k$ which means that $c = \frac{x_2-m_k}{n_q-n_k}$.
Now since $c \leq \frac{m_q-m_k}{n_q-n_k}$ we must have $x_2 \leq m_q$.
Since $m_k = c(n+n_k) - m$, we also have $x_2 = c(n_q+n) - m$, yielding
$c = \frac{m+x_2}{n+n_q} \leq \frac{m+m_q}{n+n_q}$, thus $q$ is also optimal.
\end{proof}

Theorem~\ref{thr:optimal_on_convex_hull} states that we need to only consider the convex hull $H$ of the set $\set{(m_i, n_i)}$ when searching for the optimal new label. Note that $H$ does not depend on $n$ or $m$. Moreover, we can use the algorithm by~\citet{overmars1981maintenance} to maintain $H$ as $n_k$ and $m_k$ are updated in $\bigO{\log^2 \abs{L}}$ time per update. We will see that the number of needed updates is bounded by the number of edge-label pairs.

However, the convex hull can be as large as the original set, so our goal is to avoid enumerating over the whole set. To this end, we design a binary search strategy over the hull. We will first introduce two quantities used in our search.

\begin{definition}
Given two points $p, q \in \hull{X}$, we define the inverse slope as $s(p,q) = \frac{m_q-m_p}{n_q-n_p}$ and the bias term as $b(p,q) = \frac{m_q n_p - m_p n_q}{n_q-n_p}$.
\end{definition}

First, let us prove that both $s$ and $b$ are monotonically decreasing. 

\begin{lemma}
\label{bound_decreasing}
Let $p$, $q$, and $r$ be three consecutive points in $\hull{X}$. Then we have $n \times s(q,r) + b(q,r) \leq n \times s(p,q) + b(p,q)$, for any $n \geq 0$.
\end{lemma}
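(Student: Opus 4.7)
The plan is to give a geometric interpretation of the expression $n \cdot s(p,q) + b(p,q)$ that reduces the lemma to the convexity already guaranteed by $p$, $q$, $r$ being consecutive on the hull.

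A short algebraic computation shows the identity $n \cdot s(p,q) + b(p,q) = -x^{*}$, where $x^{*}$ is the $x$-coordinate at which the straight line $L_{pq}$ through $p$ and $q$ meets the horizontal line $y = -n$. I would verify this by parametrizing $L_{pq}$ with $y$ as the free variable, substituting $y = -n$, and collecting terms over the common denominator $n_q - n_p$. With this identity in hand, the lemma becomes the assertion that, at height $y = -n$, the $x$-coordinate of $L_{pq}$ is at most the $x$-coordinate of $L_{qr}$.

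Both lines pass through the shared point $q$. As the paper notes immediately after Definition 1, consecutive segments of $\hull{X}$ have strictly increasing standard slopes, so $L_{qr}$ has slope at least that of $L_{pq}$, and both slopes are strictly positive because the $x$- and $y$-coordinates are strictly increasing along consecutive hull points. Since $n \geq 0$ and $n_q > 0$, both lines are evaluated strictly below $q$. For two lines through a common point with positive slopes, the one with the larger slope has the larger $x$-coordinate at any height strictly below the common point: from $x = m_q + (y - n_q)/\sigma$ at $y = -n$ one reads off that $x$ is monotonically increasing in $\sigma$ whenever $y < n_q$. Hence $x^{*}_{pq} \leq x^{*}_{qr}$, and negating using the identity from the previous paragraph gives $n \cdot s(p,q) + b(p,q) \geq n \cdot s(q,r) + b(q,r)$.

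The main thing to watch out for is the sign bookkeeping: $b$ contributes a negated intercept, and the comparison is made \emph{below} $q$ rather than above, so flipping either convention reverses the inequality. Once those signs are pinned down and the condition $n_q > 0$ (from positivity of the points) is used to keep $-n < n_q$, nothing beyond the convex-hull slope inequality is required. An alternative, fully algebraic route would establish $s(q,r) \leq s(p,q)$ and $b(q,r) \leq b(p,q)$ separately and then add them weighted by $n \geq 0$; both of those inequalities reduce, after clearing the (positive) denominators, to the same slope comparison $(m_q - m_p)(n_r - n_q) \geq (m_r - m_q)(n_q - n_p)$, but the geometric route seems cleaner to write.
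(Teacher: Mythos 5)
Your proof is correct, and its main route is genuinely different from the paper's. The paper argues purely algebraically: it inverts the slope inequality to get $s(q,r) \le s(p,q)$, then performs a rather terse cross-multiplication and rearrangement to obtain $b(q,r) \le b(p,q)$, and adds the two inequalities weighted by $n \ge 0$. Your identity $n\, s(p,q) + b(p,q) = -x^{*}_{pq}$, where $x^{*}_{pq}$ is the abscissa at which the line through $p$ and $q$ meets the horizontal line $y=-n$, checks out, and it collapses the lemma to a single observation: two lines of positive slope through the common point $q$, evaluated at a height strictly below $q$, are ordered in $x$ according to their slopes, which is exactly the convexity of the hull at $q$. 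The geometric route is arguably cleaner and makes explicit where each hypothesis enters: $n\ge 0$ together with $n_q>0$ (positivity of the points) is what places the evaluation height $-n$ below $q$. The paper also needs $n_q>0$, but only implicitly, when it multiplies by $n_q/((n_r-n_q)(n_q-n_p))$ in the derivation of $b(q,r)\le b(p,q)$. Your closing remark is also accurate: both $s(q,r)\le s(p,q)$ and $b(q,r)\le b(p,q)$ reduce, after clearing the positive denominators (and, in the $b$ case, factoring out $n_q>0$), to the single product inequality $(m_q-m_p)(n_r-n_q)\ge(m_r-m_q)(n_q-n_p)$; that alternative is precisely the paper's proof, so you have in effect subsumed it.
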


\begin{proof}
The slope for the segment between $p$ and $q$ is less than or equal to the slope for the segment between $q$ and $r$. Inversing the slopes leads to
\[
    s(q, r) = 
    \frac{m_r-m_q}{n_r-n_q}
    \leq
    \frac{m_q-m_p}{n_q-n_p}
    = s(p, q).
\] 
By cross-multiplying, adding $m_q n_q - m_q n_p - m_q n_r + \frac{m_q n_p n_r}{n_q}$ to both sides,  multiplying by $\frac{n_q}{(n_r-n_q)(n_q-n_p)}$, and simplifying, we get  
\[
    b(q,r) = \frac{m_r n_q - m_q n_r}{n_r-n_q} \leq \frac{m_q n_p - m_p n_q}{n_q-n_p} = b(p,q).
\]

Combining the two equations proves the claim.
\end{proof}

Next, we show the key necessary condition for the optimal point.

\begin{lemma}
Let $p$, $q$, and $r$ be 3 consecutive points in $\hull{X}$.
Select $n, m \geq 0$.
If $q$ optimizes $\frac{m_q + m}{n_q + n}$, then $n \times s(q,r) + b(q,r) \leq m \leq n \times s(p,q) + b(p,q)$.
\end{lemma}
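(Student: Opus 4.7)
The plan is to derive both inequalities directly from the two optimality comparisons of $q$ against its neighbors $p$ and $r$ on the hull. Concretely, since $q$ optimizes $\frac{m + m_q}{n + n_q}$ over $\hull{X}$, in particular we have
\[
\frac{m + m_q}{n + n_q} \geq \frac{m + m_p}{n + n_p}
\qquad \text{and} \qquad
\frac{m + m_q}{n + n_q} \geq \frac{m + m_r}{n + n_r}.
\]
All denominators are positive (since $n, m \geq 0$ and the points lie in the positive quadrant), so I would cross-multiply and cancel the common $mn$ term in each inequality.

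The first comparison rearranges to $m(n_q - n_p) \leq n(m_q - m_p) + (m_q n_p - m_p n_q)$, and the second to $m(n_r - n_q) \geq n(m_r - m_q) + (m_r n_q - m_q n_r)$. At this point I would invoke the fact that along the lower-right convex hull, the $n$-coordinates of consecutive points are strictly increasing (as noted immediately after the definition of $\hull{\cdot}$), so $n_q - n_p > 0$ and $n_r - n_q > 0$. Dividing each rearranged inequality by its positive denominator then yields $m \leq n \cdot s(p,q) + b(p,q)$ and $m \geq n \cdot s(q,r) + b(q,r)$ respectively, directly matching the definitions of $s$ and $b$.

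The argument is essentially a two-line cross-multiplication, so there is no serious obstacle. The only care point is making sure the denominators have the correct sign when dividing; as long as I rely on the strict monotonicity of $n$-coordinates along the hull, the inequality directions are preserved. Combining the two derived bounds gives exactly the stated chain $n \cdot s(q,r) + b(q,r) \leq m \leq n \cdot s(p,q) + b(p,q)$, completing the proof.
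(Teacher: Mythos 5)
Your proof is correct and follows essentially the same route as the paper's: both derive the two bounds by comparing $q$'s objective value against its hull neighbors $p$ and $r$ and solving the resulting inequalities for $m$, using the positivity of the denominators and the monotonicity of the $n$-coordinates along the hull to preserve inequality directions. (Incidentally, your version is cleaner than the paper's, whose second displayed bound contains a typo writing $s(p,q)+b(p,q)$ where $s(q,r)+b(q,r)$ is meant.)
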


\begin{proof}
Since $q$ is optimal, we have $\frac{m+m_p}{n+n_p} \leq \frac{m+m_q}{n+n_q}$. Solving for $m$ gives us
$m \leq n\frac{m_q-m_p}{n_q-n_p} + \frac{m_q n_p - m_p n_q}{n_q-n_p} = n \times s(p,q) + b(p,q)$. Similarly, due to optimality, $\frac{m+m_r}{n+n_r} \leq \frac{m+m_q}{n+n_q}$, and solving for $m$ leads to
$m \geq  n \times s(p,q) + b(p,q)$, proving the claim.
\end{proof}

The two lemmas allow us to use binary search as follows. Given two consecutive points $p$ and $q$ we test whether $m \leq n \times s(p,q) + b(p,q)$. If true, then the optimal label is $q$ or to the right of $q$, if false, the optimal point is to the left of $q$. To perform the binary search, we can use directly the structure maintained by the algorithm by~\citet{overmars1981maintenance} since it stores the current convex hull in a balanced search tree. Moreover, the algorithm allows evaluating any function based on the neighboring points. Specifically, we can maintain $s$ and $b$. In summary, we can find the optimal label in $\bigO{\log \abs{L}}$ time.

Our next result formalizes the above discussion.

\begin{theorem}
\alggreedyor runs in $\bigO{p\log^2{\abs{L}}+\abs{V}+\abs{E}}$ time, where
$p$ is the number of edge-label pairs $p=\abs{\{(e,k)\mid e\in E, k\in \lab(e)\}}$.
\end{theorem}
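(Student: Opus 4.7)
The plan is to bucket the running time into three sources: (i) initialization, (ii) the bookkeeping performed when edges are processed inside the outer while-loop, and (iii) maintenance of the convex hull of $\set{(m_\ell, n_\ell)}$ together with the binary-search selection of the best label. First I would handle (i). Computing $n_\ell$ and $m_\ell$ for every label amounts to one pass over the edge-label pairs and takes $\bigO{\abs{V} + \abs{E} + p}$; building each set $S_v$ is also $\bigO{p}$, since each pair $(e, \ell)$ contributes to at most the two sets $S_u, S_v$ at the endpoints of $e$.

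For (ii), the crucial observation is that each edge is removed the first time a label on it is chosen, so every edge is processed at most once during the entire run. When edge $e$ is processed, the inner loop over $\lab(e)$ performs $\abs{\lab(e)}$ updates to the counters $m_\ell$; summed over processed edges this is $\bigO{p}$. The potentially costly loop over $S_v$ fires only when $r_v = 0$, and since $r_v$ flips from $0$ to $1$ at most once per vertex, each $S_v$ is traversed at most once during the whole execution, contributing $\sum_v \abs{S_v} \le 2p$ updates in total to the counters $n_\ell$. All bookkeeping outside the hull therefore costs $\bigO{\abs{V} + \abs{E} + p}$.

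For (iii), the same charging shows that the total number of movements of the points $(m_\ell, n_\ell)$ in the hull is $\bigO{p}$: every $m_\ell$-decrement is charged to a pair $(e, \ell)$ with $\ell \in \lab(e)$ for a processed edge $e$, and every $n_\ell$-decrement to a pair $(v, \ell)$ with $\ell \in S_v$ when an endpoint $v$ first enters the induced vertex set. Using the Overmars--van Leeuwen data structure, each movement (a deletion of the old point followed by an insertion of the new one) costs $\bigO{\log^2 \abs{L}}$, so hull maintenance costs $\bigO{p \log^2 \abs{L}}$ overall. By the two preceding lemmas, selecting the best label reduces to a binary search over the hull guided by the stored inverse slopes $s$ and biases $b$, taking $\bigO{\log \abs{L}}$ per outer iteration and at most $\bigO{\abs{L} \log \abs{L}}$ in total; this is absorbed into the hull cost (labels that never appear can be discarded during initialization, so we may assume $\abs{L} \le p$).

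Summing the three buckets yields the claimed $\bigO{\abs{V} + \abs{E} + p \log^2 \abs{L}}$. The main obstacle is the charging argument in (ii) and (iii): a naive count fears $\Omega(\abs{L})$ work per processed edge coming from the $S_v$-loop, and only the invariant that each vertex triggers such a traversal at most once—together with $\sum_v \abs{S_v} \le 2p$—keeps the total number of counter updates, and thus hull operations, linear in $p$.
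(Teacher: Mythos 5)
Your proof is correct and follows essentially the same route as the paper's: charge all counter updates to edge--label pairs (using that each edge is processed once and each $S_v$ is traversed only when $r_v$ first flips, with $\sum_v \abs{S_v} \in \bigO{p}$), pay $\bigO{\log^2 \abs{L}}$ per update to the Overmars--van Leeuwen hull structure, and $\bigO{\log \abs{L}}$ per label selection via the binary search over slopes and biases. Your write-up is simply a more detailed account of the same argument, including the harmless assumption $\abs{L} \le p$ needed to absorb the selection cost.
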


\begin{proof}
The proof is similar to the proof of Theorem~\ref{thr:contime}, except we have
replaced a search tree with the convex hull structure by~\citet{overmars1981maintenance}.
The inner for-loops are evaluated at most $\bigO{p}$ times since an edge or a node is visited only once, and $\sum_v \abs{S_v} \in \bigO{p}$. Maintaining the hull requires $\bigO{\log^2 \abs{L}}$ time, and there are at most $\bigO{p}$ such updates. Searching for an optimal label requires $\bigO{\log \abs{L}}$ time, and there are at most $\abs{L}$ such searches.
\end{proof}

We should point out that a faster algorithm by~\citet{brodal2002dynamic} maintains the convex hull in $\bigO{\log \abs{L}}$ time. However, this algorithm does not provide a search tree structure that we can use to search for the optimal addition.
\section{Finding subgraphs with high $\alpha$-density}\label{sec:alpha}

In this section, we focus on the problem \prbldalpha of finding subgraphs with high $\alpha$-density. 

The following classic result in fractional programming~\citep{dinkelbach1967nonlinear} shows how the problem of finding the maximum density subgraph reduces to maximizing the $\alpha$-density of a subgraph for a large enough value of $\alpha$. An immediate consequence of this result is that solving \prbldalpha is \np-hard.

\begin{theorem}
\label{thr:alphamax}
Write $H_\alpha$ to be the solution to \prbldalpha.
There is $\tau$ such that $H_\tau$ also solves \prbld.
Moreover, for any $\alpha > \tau$, the graph $H_\alpha$
either solves \prbld or is empty.
\end{theorem}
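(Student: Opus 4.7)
The plan is a standard Dinkelbach-style reduction. I would set $\tau = d^*$, the optimal density achieved by \prbld, and exploit the identity
\[
    g(H; \alpha) = |F| - \alpha |W| = |W|\fpr{d(H) - \alpha}
\]
valid for every non-empty label-induced $H = (W, F)$. The empty label-induced subgraph---always among the candidates, via $L^* = \emptyset$ for the disjunctive case, or via any label set unsatisfied by any single edge in the conjunctive case---yields $g = 0$ for every $\alpha$.

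For the existence of $\tau$, I would evaluate the identity at $\alpha = \tau$. Since $d(H) \leq \tau$ for every non-empty label-induced $H$, it gives $g(H;\tau) \leq 0$, with equality if and only if $H$ itself solves \prbld. In particular, any \prbld-optimal $H^*$ attains $g(H^*;\tau) = 0$, matching the $0$ achieved by the empty subgraph, so no candidate beats this value. Hence $H^*$ is a maximizer of $g(\cdot;\tau)$ and may be taken as $H_\tau$, which by construction solves \prbld.

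For the second claim, fix $\alpha > \tau$. For any non-empty label-induced $H$, the identity yields
\[
    g(H;\alpha) = |W|\fpr{d(H) - \alpha} \leq |W|\fpr{\tau - \alpha} < 0,
\]
since $d(H) \leq \tau$ and $|W| \geq 1$. The empty subgraph still achieves $g = 0$, strictly beating every non-empty candidate, so $H_\alpha$ must be empty---unless the empty subgraph itself is a \prbld solution (the degenerate case in which \prbld is trivially solved by the empty graph), in which case the two alternatives in the theorem statement merge.

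I expect the main obstacle to be conceptual rather than computational: one must include the empty subgraph as a valid candidate (it is well-defined for $g$ by direct substitution even though $d$ is not) and handle the tie at $\alpha = \tau$, where the empty subgraph and every \prbld-optimal $H^*$ simultaneously attain $g = 0$, so the selection of $H_\tau$ must be made explicitly.
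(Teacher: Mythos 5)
Your proposal is correct in substance and follows the same Dinkelbach-style reduction as the paper, but the two arguments differ in the choice of $\tau$, and this difference matters for the literal statement. You set $\tau = \sigma$, the optimal density of \prbld itself; then every \prbld-optimal subgraph and the empty subgraph tie at $g = 0$, so you can only conclude that \emph{some} maximizer of $g(\cdot;\tau)$ solves \prbld, and you must break the tie by fiat (as you acknowledge at the end). The paper instead exploits finiteness of the candidate set: since only finitely many densities are attainable, there is a $\tau$ strictly between $\sigma$ and the second-best density, and for this $\tau$ one has $g(H^*;\tau) = \abs{V(H^*)}(\sigma - \tau) > 0$ \emph{strictly}. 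Consequently every maximizer $H_\tau$ satisfies $g(H_\tau;\tau) > 0$, hence is non-empty with $d(H_\tau) > \tau$, hence has density exactly $\sigma$ by the choice of $\tau$ --- no tie-breaking needed, and the claim holds for an arbitrary solution $H_\tau$, which is what the theorem asserts. What your choice buys in return is a cleaner second half: for $\alpha > \sigma$ every non-empty candidate has $g(H;\alpha) = \abs{W}(d(H) - \alpha) < 0$, so $H_\alpha$ is forced to be empty, whereas the paper's $\tau$ leaves a window $\tau < \alpha \le \sigma$ in which non-empty optimal solutions can still occur. Your observation that the empty subgraph must be admitted as a candidate (immediately via $L^* = \emptyset$ in the disjunctive case, only via a label set satisfied by no edge in the conjunctive case) is a point the paper glosses over and is worth keeping; if no label set induces the empty graph, the inequality $g(H_\alpha;\alpha) \ge 0$ used in both arguments needs separate justification.
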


\begin{proof}
Let $H^*$ be a solution to \prbld with $\sigma = d(H^*)$.
Since there are a finite number of subgraphs, there is $\tau < \sigma$
such that any graph $H$ with $d(H) \geq \tau$ has $d(H) = \sigma$.

Since $g(H^*; \tau) > 0$, we have $g(H_\tau; \tau) > 0$, or $\abs{E(H_\tau)} - \tau \abs{V(H_\tau)} > 0$ which implies $d(H_\tau) > \tau$. By definition of $\tau$, the subgraph $H_\tau$ solves \prbld.

Similarly, for any $\alpha > \tau$, we have $g(H_\alpha; \alpha) \geq 0$. Consequently, either $H_\alpha$ is empty or $d(H_\alpha) \geq \alpha > \tau$, that is, $H_\alpha$ solves \prbld.
\end{proof}

\begin{corollary}
\label{cor:npalpha}
\prbldalpha is \np-hard for both $\indor{}$ and $\indand{}$.
Moreover, both problems are inapproximable unless $\poly=\np$.
\end{corollary}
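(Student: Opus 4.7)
The plan is to derive both claims from the NP-hardness of \prbld together with the structural relation established in Theorem \ref{thr:alphamax}, by exhibiting a reduction that makes the \emph{sign} of the \prbldalpha optimum encode the answer to a density-threshold question for \prbld.

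For the NP-hardness part, I would reduce the decision version of \prbld---``does some label-induced subgraph satisfy $d(H) \geq \sigma$?''---to \prbldalpha. Given an instance $(G, L, \sigma)$, I construct an instance of \prbldalpha on the same $(G, L)$ with a single judicious choice of $\alpha$. Since every attainable density is a rational $p/q$ with $q \leq \abs{V}$, two distinct attainable densities differ by at least $1/\abs{V}^2$, so setting $\alpha = \sigma - 1/(2\abs{V}^2)$ places $\alpha$ strictly below $\sigma$ and strictly above every attainable density that is below $\sigma$. For any $H$ with $d(H) \geq \sigma$ we then have $g(H; \alpha) = (d(H) - \alpha)\abs{V(H)} > 0$, while for any nonempty $H$ with $d(H) < \sigma$ we have $g(H;\alpha) < 0$. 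Because the empty label set is always feasible with $g=0$, the optimum of \prbldalpha is strictly positive precisely when the \prbld decision answer is \emph{yes}, and equal to zero otherwise. Invoking Theorems \ref{thr:npand} and \ref{thr:npor} then yields NP-hardness of both variants of \prbldalpha.

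For the inapproximability part, I would observe that the optimum OPT is always nonnegative and that any $c$-approximation returning a feasible solution of value $v$ must satisfy $v \leq \text{OPT}$ by feasibility and $v \geq \text{OPT}/c$ by the approximation guarantee. Hence $v > 0$ if and only if $\text{OPT} > 0$, so a polynomial-time $c$-approximation for any finite $c$ would decide the NP-hard problem constructed above, implying $\poly=\np$. This argument works uniformly for $\indor{}$ and $\indand{}$, since the reduction only transplants $(G,L)$ and does not interact with the choice of inducing function.

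The main obstacle is confirming that the gap construction is computationally legitimate: the threshold $\sigma$ produced by the NP-hardness reductions behind Theorems \ref{thr:npand} and \ref{thr:npor} must be a rational of polynomial bit-length so that $\alpha = \sigma - 1/(2\abs{V}^2)$ is polynomial-time computable and so that the $1/\abs{V}^2$ separation dominates rounding issues. For the standard combinatorial reductions used in those proofs this is immediate, but it should be verified when writing out the proof in full.
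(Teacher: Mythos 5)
Your proposal is correct and follows essentially the same route as the paper: both set $\alpha$ just below the hard density threshold coming from the \prbld reductions (using that attainable densities are separated by an inverse-polynomial gap) so that the sign of the \prbldalpha optimum decides the instance, and both derive inapproximability from the fact that any multiplicative guarantee preserves that sign. One small correction: for $\indand{}$ the empty label set induces the \emph{whole} graph rather than the empty one, so to justify $\mathrm{OPT}\ge 0$ you should instead observe that in the constructed instances the full label set induces the empty subgraph (for $\indor{}$ the empty label set does work as you say).
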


\begin{proof}
The proof is given in Appendix~\ref{sec:proofs}.
\end{proof}

To find solutions to \prbldalpha in practice, we adapt the previous greedy algorithms to find subgraphs with high $\alpha$-density. In the conjunctive case, we get the $\alggreedyandalpha$ algorithm by simply changing the density on line~\ref{lst:line:condensity} of Algorithm~\ref{alg:greedy_conjunctive} from $\frac{m_\ell}{n_\ell}$ to $m_\ell-\alpha n_\ell$. This leads to the same computational complexity as for \alggreedyand.

In the disjunctive case, we again keep track of the counters to find the number of additional nodes and edges when a label is added to the current set of labels. However, the $\alpha$-density to maximize now becomes $(m+m_k)-\alpha(n+n_k)$. As $m-\alpha n$ does not depend on the label, we only need to find the label $k$ that maximizes $m_k-\alpha n_k$. We may thus use a balanced search tree as in the conjunctive case. The pseudo-code for this algorithm is given in Algorithm~\ref{alg:greedy_disjunctive_alpha}. 

\begin{algorithm}[t!]
\label{alg:greedy_disjunctive_alpha}
\caption{\alggreedyoralpha, greedy search for the disjunctive-induced subgraphs with high $\alpha$-density}
$n \gets 0$;
$m \gets 0$\;
$n_\ell \gets \abs{V(\ell)}$, for each label $\ell \in L$\;
$m_\ell \gets \abs{E(\ell)}$, for each label $\ell \in L$\;
$S_{v} \gets \set{\ell \in L \mid \text{there is an edge with label $\ell$ adjacent to $v$}}$\;
$r_{v} \gets 0$, for each vertex $v$\;
$T \gets$ labels sorted by the density values $m_\ell-\alpha n_\ell$ (e.g., in a red-black tree)\;
$A_0 \gets \emptyset$ and $i \gets 0$\;
\While{there are labels} {
pick and remove label $k$ that has the maximum density in $T$\;
$A_{i+1} \gets A_i \cup \{k\}$\;
\For{each edge $e = (u, v)$ with label $k$} {
    \lFor{each label $\ell$ of edge $e = (u, v)$} {
        $m_\ell \gets m_\ell - 1$
    }
    $m \gets m + 1$\;
    \If {$r_{v} = 0$} {
        \lFor{each label $\ell$ in $S_{v}$} {
            $n_\ell \gets n_\ell-1$
        }
        $n \gets n+1$\;
    } 
    \If {$r_{u} = 0$} {
        \lFor{each label $\ell$ in $S_{u}$} {
            $n_\ell \gets n_\ell-1$
        }
        $n \gets n+1$\;
    } 
    $r_{v} \gets 1$;
    $r_{u} \gets 1$\;
    remove edge $e$\;
}
update $T$ for all labels $\ell$ with changed values of $m_\ell$ or $n_\ell$\;
$i \gets i+1$\;
}
\Return the set of labels $A_i$ that yields the highest density\;
\end{algorithm}

As \alggreedyoralpha does not need to use a convex hull but uses a balanced search tree instead, the running time becomes the same as for the conjunctive case.

\begin{theorem}
\label{thr:alphatime}
\alggreedyandalpha and \alggreedyoralpha run in $\bigO{p\log{\abs{L}}+\abs{V}+\abs{E}}$ time, where
$p$ is the number of edge-label pairs $p=\abs{\{(e,k)\mid e\in E, k\in \lab(e)\}}$.
\end{theorem}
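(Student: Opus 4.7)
My plan is to treat the two algorithms separately and reduce each to a bookkeeping argument analogous to Theorem~\ref{thr:contime}, with the key observation that the $\alpha$-density decomposes additively and therefore eliminates the need for the convex-hull machinery of Section~\ref{sec:dis}.

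For \alggreedyandalpha, I would note that the algorithm is structurally identical to \alggreedyand (Algorithm~\ref{alg:greedy_conjunctive}); only the key used to order the balanced search tree $T$ changes from $m_\ell/n_\ell$ to $m_\ell - \alpha n_\ell$. The counters $n_\ell$, $m_\ell$, and $r_{v,\ell}$ are maintained by exactly the same inner for-loops, so the complexity analysis of Theorem~\ref{thr:contime} transfers verbatim. In particular, once an edge is processed by the outer loop, it is removed, so the inner loops together perform at most $p$ iterations over the whole run; each iteration triggers an $\bigO{\log \abs{L}}$ update to $T$. Adding the initialization cost of $\bigO{\abs{V} + \abs{E} + \abs{L}\log \abs{L}}$ and the $\bigO{\abs{L}\log \abs{L}}$ cost of the at-most-$\abs{L}$ extractions from $T$ yields the claimed bound.

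For \alggreedyoralpha, the essential new point is that the quantity to maximize at each step is
\[
(m+m_k) - \alpha(n+n_k) = (m - \alpha n) + (m_k - \alpha n_k),
\]
and the first summand does not depend on $k$. Hence the optimal label is the one maximizing $m_k - \alpha n_k$, a quantity that depends only on per-label counters. This is precisely what a balanced search tree $T$ keyed by $m_\ell - \alpha n_\ell$ supports, replacing the convex-hull structure used in \alggreedyor. The remaining argument is again a charging/amortization step: the outer loop iterates over each edge $e$ with label $k$ at most once (the edge is removed immediately after), and for each such edge the inner loops iterate over $\lab(e)$ and over $S_u, S_v$. Since $\sum_{e} \abs{\lab(e)} = p$ and $\sum_v \abs{S_v} \in \bigO{p}$, the total number of counter updates is $\bigO{p}$, each triggering one $\bigO{\log \abs{L}}$ update to $T$. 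Label extraction from $T$ happens at most $\abs{L}$ times at $\bigO{\log \abs{L}}$ each, and the initialization is $\bigO{\abs{V} + \abs{E} + \abs{L}\log \abs{L}}$, giving the same bound as in the conjunctive case.

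There is no real obstacle here; the whole point is that replacing the ratio objective by the linear objective $m_k - \alpha n_k$ removes the coupling between the iteration-level state $(m,n)$ and the per-label state $(m_k, n_k)$, which in turn lets us use a straightforward balanced tree of cost $\bigO{\log \abs{L}}$ per update instead of the $\bigO{\log^2 \abs{L}}$ hull-maintenance data structure. The only thing to be careful about is to verify that the aggregate size of the inner-loop work is still $\bigO{p}$ under the edge-removal invariant, which follows by the same charging argument used for \alggreedyor.
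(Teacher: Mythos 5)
Your proof is correct and follows essentially the same route as the paper, which simply states that both cases are handled exactly as in Theorem~\ref{thr:contime}, relying on the observation (made in the surrounding text) that the additive decomposition $(m+m_k)-\alpha(n+n_k) = (m-\alpha n)+(m_k-\alpha n_k)$ lets a balanced search tree keyed by $m_\ell - \alpha n_\ell$ replace the convex hull. Your write-up just makes explicit the charging argument that the paper leaves implicit.
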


\begin{proof}
The proofs for both cases are virtually the same as the proof of Theorem~\ref{thr:contime}.
\end{proof}

We conclude this section by considering the (lack of the) hierarchy property of $\alpha$-density.
\citet{tatti2019density} showed that the subgraphs (without label constraints) optimizing $g(\cdot, \alpha)$ form a nested
structure, that is, if we write $H_\alpha$ to be the optimal solution, then $H_\beta \subseteq H_\alpha$ for any $\beta > \alpha$.
Such a decomposition may be useful as it partitions the nodes into increasingly dense regions. Unfortunately, this is not the case for us as shown in Figure~\ref{fig:counter}.

\begin{figure}
\hspace*{\fill}
\begin{tikzpicture}

\node[circle, inner sep=2pt] (v1) at (0, 0) {};
\draw[thick,fill=white, draw=black] (v1) circle (3pt);

\node[circle, inner sep=2pt] (v2) at (-0.3, -0.5) {};
\draw[thick,fill=white, draw=black] (v2) circle (3pt);

\node[circle, inner sep=2pt] (v3) at (0.3, -0.5) {};
\draw[thick,fill=white, draw=black] (v3) circle (3pt);

\begin{scope}[xshift=1cm, yshift=-0.2cm]
\node[circle, inner sep=2pt] (u1) at (0:0.5) {};
\draw[thick,fill=white, draw=black] (u1) circle (3pt);

\node[circle, inner sep=2pt] (u2) at (1*72:0.5) {};
\draw[thick,fill=white, draw=black] (u2) circle (3pt);

\node[circle, inner sep=2pt] (u3) at (2*72:0.5) {};
\draw[thick,fill=white, draw=black] (u3) circle (3pt);

\node[circle, inner sep=2pt] (u4) at (3*72:0.5) {};
\draw[thick,fill=white, draw=black] (u4) circle (3pt);

\node[circle, inner sep=2pt] (u5) at (4*72:0.5) {};
\draw[thick,fill=white, draw=black] (u5) circle (3pt);
\end{scope}

\draw[ultra thick, yafcolor2!50!white] (v1) edge[bend left=10] (v2);
\draw[ultra thick, yafcolor2!50!white] (v1) edge[bend left=10] (v3);
\draw[ultra thick, yafcolor2!50!white] (v2) edge[bend left=10] (v3);

\draw[ultra thick, yafcolor1!50!white] (u1) edge[bend left=10] (u2);
\draw[ultra thick, yafcolor1!50!white] (u2) edge[bend left=10] (u3);
\draw[ultra thick, yafcolor1!50!white] (u3) edge[bend left=10] (u4);
\draw[ultra thick, yafcolor1!50!white] (u4) edge[bend left=10] (u5);

\node[circle, inner sep=2pt] (l1) at (2, 0) {};
\draw[thick,fill=yafcolor2!50!white, draw=yafcolor2] (l1) circle (3pt);
\node[anchor=west, inner sep=6pt] at (l1) {$= \ell_1$};
\node[circle, inner sep=2pt] (l1) at (2, -0.4) {};
\draw[thick,fill=yafcolor1!50!white, draw=yafcolor1] (l1) circle (3pt);
\node[anchor=west, inner sep=6pt] at (l1) {$= \ell_2$};

\end{tikzpicture}\hfill
\begin{tikzpicture}
\begin{scope}[yshift=-0.2cm]
\node[circle, inner sep=2pt] (k1) at (0*60:0.5) {};
\draw[thick,fill=white, draw=black] (k1) circle (3pt);

\node[circle, inner sep=2pt] (k2) at (1*60:0.5) {};
\draw[thick,fill=white, draw=black] (k2) circle (3pt);

\node[circle, inner sep=2pt] (k3) at (2*60:0.5) {};
\draw[thick,fill=white, draw=black] (k3) circle (3pt);

\node[circle, inner sep=2pt] (k4) at (3*60:0.5) {};
\draw[thick,fill=white, draw=black] (k4) circle (3pt);

\node[circle, inner sep=2pt] (k5) at (4*60:0.5) {};
\draw[thick,fill=white, draw=black] (k5) circle (3pt);

\node[circle, inner sep=2pt] (k6) at (5*60:0.5) {};
\draw[thick,fill=white, draw=black] (k6) circle (3pt);
\end{scope}

\node[circle, inner sep=2pt, right=0.3cm of k1] (u1) {};
\draw[thick,fill=white, draw=black] (u1) circle (3pt);

\node[circle, inner sep=2pt, above left=0.1 and 0.3 of k4] (v1) {};
\draw[thick,fill=white, draw=black] (v1) circle (3pt);

\node[circle, inner sep=2pt, below left=0.1 and 0.3 of k4] (v2) {};
\draw[thick,fill=white, draw=black] (v2) circle (3pt);

\foreach \x [count=\xi from 2] in {1,...,5}
\foreach \y in {\xi,...,6} {
\draw[ultra thick, yafcolor1!50!white] (k\x) edge[bend left=10] (k\y);
\draw[ultra thick, dashed, yafcolor2!50!white] (k\x) edge[bend left=10] (k\y);
}

\draw[ultra thick, yafcolor2!50!white] (k1) edge[bend left=10] (u1);
\draw[ultra thick, yafcolor1!50!white] (k4) edge[bend left=10] (v1);
\draw[ultra thick, yafcolor1!50!white] (v1) edge[bend left=10] (v2);
\draw[ultra thick, yafcolor1!50!white] (v2) edge[bend left=10] (k4);

\node[circle, inner sep=2pt] (l1) at (1.5, 0) {};
\draw[thick,fill=yafcolor2!50!white, draw=yafcolor2] (l1) circle (3pt);
\node[anchor=west, inner sep=6pt] at (l1) {$= \ell_1$};
\node[circle, inner sep=2pt] (l1) at (1.5, -0.4) {};
\draw[thick,fill=yafcolor1!50!white, draw=yafcolor1] (l1) circle (3pt);
\node[anchor=west, inner sep=6pt] at (l1) {$= \ell_2$};

\end{tikzpicture}
\hspace*{\fill}

\caption{Subgraphs with optimal $\alpha$-density are not nested.
Left figure: $\ell_1$ is optimal for $\alpha = 3/4$
and $\ell_2$ is optimal for $\alpha = 1/4$ when using $\indand{}$.
Right figure: $\ell_1$ is optimal for $\alpha = 2.25$
and $\ell_2$ is optimal for $\alpha = 1.75$ when using $\indor{}$.}
\label{fig:counter}
\end{figure}
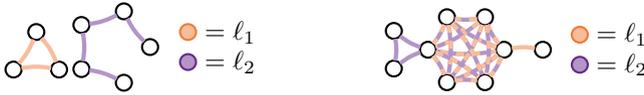

Interestingly enough, if we allow more flexible queries, we can show that we too obtain a nested structure. More formally, given a Boolean formula $B$ we define $G(B)$ to be the subgraph consisting of edges whose labels satisfy $B$, and the incident vertices. Then the optimization problem would be to find the Boolean formula $B$ maximizing $g(G(B); \alpha)$. We then have the following proposition.

\begin{proposition}
Let $H_\alpha$ be a subgraph induced by a Boolean formula $B_\alpha$ that optimizes
$g(\cdot; \alpha)$.
Then $H_\alpha \subseteq H_\beta$ for any $\alpha > \beta$.
\end{proposition}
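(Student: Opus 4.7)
The plan is to use an exchange argument. Given optimal $H_\alpha = G(B_\alpha)$ and $H_\beta = G(B_\beta)$ with $\alpha > \beta$, I form the two subgraphs $H_\wedge = G(B_\alpha \wedge B_\beta)$ and $H_\vee = G(B_\alpha \vee B_\beta)$. Since $B_\alpha \wedge B_\beta$ implies $B_\alpha \vee B_\beta$, the inclusion $H_\wedge \subseteq H_\vee$ is automatic, so the task reduces to showing that $H_\wedge$ is still optimal at level $\alpha$ and $H_\vee$ is still optimal at level $\beta$. Replacing the original pair by $(H_\wedge, H_\vee)$ then delivers a nested pair of optimal subgraphs.

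The key accounting step relates the sizes of $H_\wedge$ and $H_\vee$ to those of $H_\alpha$ and $H_\beta$. An edge $e$ satisfies $B_\alpha \vee B_\beta$ iff it satisfies at least one of $B_\alpha, B_\beta$, and satisfies $B_\alpha \wedge B_\beta$ iff it satisfies both, which gives the exact identity
\[
\abs{E(H_\vee)} + \abs{E(H_\wedge)} = \abs{E(H_\alpha)} + \abs{E(H_\beta)}.
\]
For vertices, $V(H_\vee) = V(H_\alpha) \cup V(H_\beta)$ holds with equality, but for the meet one only gets $V(H_\wedge) \subseteq V(H_\alpha) \cap V(H_\beta)$, since a vertex in $V(H_\alpha) \cap V(H_\beta)$ may be incident to some edge of each graph without being incident to any edge common to both. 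Combining inclusion-exclusion on $V(H_\alpha)$ and $V(H_\beta)$ with this inclusion yields
\[
\abs{V(H_\vee)} + \abs{V(H_\wedge)} \leq \abs{V(H_\alpha)} + \abs{V(H_\beta)}.
\]

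Substituting into the $\alpha$-density and canceling the equal edge sums, a direct calculation gives
\[
g(H_\vee; \beta) + g(H_\wedge; \alpha) - g(H_\alpha; \alpha) - g(H_\beta; \beta) = \alpha a - \beta b,
\]
where $a = \abs{V(H_\alpha)} - \abs{V(H_\wedge)} \geq 0$ and $b = \abs{V(H_\vee)} - \abs{V(H_\beta)} \geq 0$. The vertex inequality rearranges to $b \leq a$, so for $\alpha > \beta \geq 0$ we obtain $\alpha a \geq \alpha b \geq \beta b$, i.e. the right-hand side is non-negative. On the other hand, the optimality of $H_\alpha$ at $\alpha$ and of $H_\beta$ at $\beta$ forces the reverse inequality on each summand, so both inequalities must be equalities. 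Hence $H_\wedge$ is also $\alpha$-optimal and $H_\vee$ is also $\beta$-optimal, and taking these as the new $H_\alpha, H_\beta$ proves the proposition. The degenerate regime $\beta < 0$ is easy: there the whole graph already solves \prbldalpha at level $\beta$, so $H_\alpha \subseteq H_\beta$ is immediate.

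The delicate point I expect to be the main obstacle is the vertex-count inequality: the corresponding edge relation is an equality, but on the vertex side strict inclusion can genuinely occur, and one has to verify that the direction of this slack lines up with the signs of $\alpha$ and $-\beta$ so that the telescoping $\alpha a - \beta b$ has the correct sign under $\alpha > \beta$. Once that alignment is in place, the exchange step and the optimality squeeze are routine.
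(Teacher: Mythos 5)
Your core computation is sound and is essentially the paper's own argument in a different packaging: the paper likewise passes to the subgraphs induced by $B_\alpha \vee B_\beta$ and $B_\alpha \wedge B_\beta$ and plays the optimality of $H_\alpha$ at level $\alpha$ against that of $H_\beta$ at level $\beta$ (it phrases this as a proof by contradiction rather than as a forward identity). You are in fact more careful than the paper on the one delicate point you flag: the vertex set of $G(B_\alpha \wedge B_\beta)$ can be a proper subset of $V(H_\alpha)\cap V(H_\beta)$, so the vertex relation is only an inequality, and its direction must be checked against the signs of $\alpha$ and $\beta$. Your bookkeeping ($a \geq b \geq 0$, the edge identity, the sign analysis for $\alpha > \beta \geq 0$, and the separate treatment of $\beta < 0$) is all correct.

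The gap is in your final sentence. ``Taking these as the new $H_\alpha, H_\beta$'' establishes only that \emph{some} pair of optimizers at levels $\alpha$ and $\beta$ is nested, whereas the proposition asserts that the \emph{given} optimizers satisfy $H_\alpha \subseteq H_\beta$; as written, your argument does not exclude a pair of incomparable optimizers, which is exactly what the paper's contradiction argument rules out. Fortunately you have already derived everything needed to close this. The quantity $\alpha a - \beta b$ is $\leq 0$ by the two optimality inequalities and, by your bounds, satisfies $\alpha a - \beta b = (\alpha-\beta)a + \beta(a-b) \geq (\alpha-\beta)a \geq 0$; hence $(\alpha-\beta)a = 0$, so $a = b = 0$. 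Then $\abs{V(H_\wedge)} = \abs{V(H_\alpha)}$ with $V(H_\wedge) \subseteq V(H_\alpha)\cap V(H_\beta)$ gives $V(H_\alpha) \subseteq V(H_\beta)$, and the equality $g(H_\wedge;\alpha) = g(H_\alpha;\alpha)$ combined with $a=0$ gives $\abs{E(H_\wedge)} = \abs{E(H_\alpha)}$; since $E(H_\wedge) = E(H_\alpha)\cap E(H_\beta)$, this forces $E(H_\alpha) \subseteq E(H_\beta)$. Together these yield $H_\alpha \subseteq H_\beta$ for the original pair, which is the statement you were asked to prove. Replace your last step with this deduction and the proof is complete.
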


\begin{proof}
Assume otherwise. Write $X = H_\alpha \cup H_\beta$ and $Y = H_\alpha \cap H_\beta$. Note that $X$ is induced by $B_\alpha \lor B_\beta$ and $Y$ is induced by $B_\alpha \land B_\beta$. Then
\[
    g(X; \beta) - g(H_\beta; \beta) > 
    g(X; \alpha) - g(H_\beta; \alpha) =
    g(H_\alpha; \alpha) - g(Y; \alpha) \geq 0,
\]
where the last inequality is due to the optimality of $H_\alpha$.
Thus, $g(X; \beta) > g(H_\beta; \beta)$ violating
the optimality of $H_\beta$.
\end{proof}
\section{Related work}\label{sec:related}

A closely related work to our method is an approach proposed by~\citet{galbrun2014overlapping}.
Here the authors search for multiple dense subgraphs that can be explained by conjunction
on (or the majority of) the \emph{node} labels. The authors propose a greedy algorithm for finding such subgraphs. Interestingly enough, the authors do not show that the underlying problem is \np-hard---although we conjecture that this is indeed the case---instead, they show that the subproblem arising from the greedy approach is an \np-hard problem.

Another closely related work is an approach proposed by~\citet{pool2014description}, where the authors
search for dense subgraphs that can be explained by queries on the \emph{nodes}. The quality of the subgraphs is a ratio $S/C$, where $S$ measures the goodness of a subgraph using the edges within the subgraph as well as the cross-edges, and $C$ measures the complexity of the query.

The major difference between our work and the aforementioned work is that our method uses labels on the edges. While conceptually a small difference, this distinction leads to different algorithms and different analyses of those algorithms. Moreover, we cannot apply directly the previously discussed methods to networks that only have labels on edges.

An appealing property of finding subgraphs that maximize $\abs{E(W)}/\abs{W}$, or equivalently an average degree, is that we can find the optimal solution in polynomial time~\citep{Goldberg:1984up}. Furthermore, we can 2-approximate the graph with a simple linear algorithm~\citep{Charikar:2000tg}. The algorithm iteratively removes the node with the smallest degree and then selects the best available graph. This algorithm is essentially the same as the algorithm used to discover $k$-cores, subgraphs that have the \emph{minimum} degree of at least $k$. The connection between the $k$-cores and dense subgraphs is further explored by~\citet{tatti2019density}, where the dense subgraphs are extended to create an increasingly dense structure.
A variant of a quality measure was proposed by~\citet{tsourakakis15triangle}, where the quality of the subgraph is the ratio of triangles over the vertices. In another variant by~\citet{bonchi2019distance}, the edges were replaced with paths of at most length $k$. Finding such structures in labeled graphs poses an interesting line of future work.

While finding dense subgraphs is polynomial, finding cliques is an \np-hard problem with a very strong inapproximability bound~\citep{DBLP:conf/focs/Hastad96}. Finding cliques may be impractical as they do not allow any absent edges. To relax the requirement, \citet{abello02clique} and \citet{Uno:2010:EAS:1712671.1712672} proposed
searching for quasi-cliques, that is subgraphs with a high proportion of edges, $\abs{E(W)} / \binom{\abs{W}}{2}$.
Another relaxation of cliques is $k$-plex where $k$ absent edges are allowed for a vertex~\citep{seidman1983network}. Finding $k$-plexes remain an \np-hard problem~\citep{balasundaram:2011:kflex}. Alternatively, we can relax the definition by considering $n$-cliques, where vertices must be connected with an $n$-path~\citep{Bron:1973:AFC:362342.362367}, or $n$-clans where we also require that the diameter of the graph is $n$~\citep{mokken1979cliques}. Since $1$-clique (and $1$-clan) is a clique, these problems remain computationally intractable.
\section{Experimental evaluation}\label{sec:exp}

In this section, we describe our experimental evaluation of the \alggreedyand and \alggreedyor algorithms. First, we observe how the algorithms behave on synthetic data with increasing randomness. Then we apply the algorithms to real-world datasets and analyze the results. 

We implement our algorithms in Python and the source code is available online
\footnote{\url{https://version.helsinki.fi/dacs}}.
Since the number of labels in our experiments was not exceedingly large, we did not use the speed up using convex hulls when implementing disjunctive-induced graphs. Instead, we search for the optimal label from scratch leading to a running time of $\bigO{p\abs{L}}$.

\textbf{Experiments with synthetic data:}
We evaluate the greedy algorithms on synthetic graphs of 200 vertices and 50 labels. We select 5 of the labels as target labels and construct graphs for the conjunctive and disjunctive cases such that selecting the subgraph induced by these 5 labels gives the best density. We then add random noise to the network by introducing a noise parameter $\epsilon$, which controls the probability of randomly adding and removing edges as well as adding new labels to the edges.

For the conjunctive case, we create five disjoint cliques of 10 vertices such that all edges on the $k$th clique have all except the $k$th of the target labels. Finally, we add one more 20 vertex clique that has all of the target labels. Since each of the smaller cliques is missing one of the target labels, selecting the conjunction of all of them yields the densest subgraph as the clique of 20 vertices. 

Given the noise parameter $\epsilon$, we then add noise by having each of the edges in the cliques removed with probability $\epsilon$, as well as having any other edges added between any pair of vertices with probability $\epsilon$. Finally, for each of the edges in the cliques, we add any of the other labels with probability $\epsilon$ each, except for adding the remaining target labels to edges in the cliques. 

For the disjunctive case, we have created one clique with 40 vertices. The edges in the clique are split into five sets, such that each set of edges gets one of the target labels. Now, selecting the disjunction of the five target labels induces the clique as the subgraph and results in the highest density.

We then add noise by removing edges from the clique and adding new edges between any other pair of vertices with probability $\epsilon$. In addition, each edge gains any of the other labels also with probability $\epsilon$.

\begin{figure}[t!]
\centerline{\includegraphics[width=0.5\textwidth]{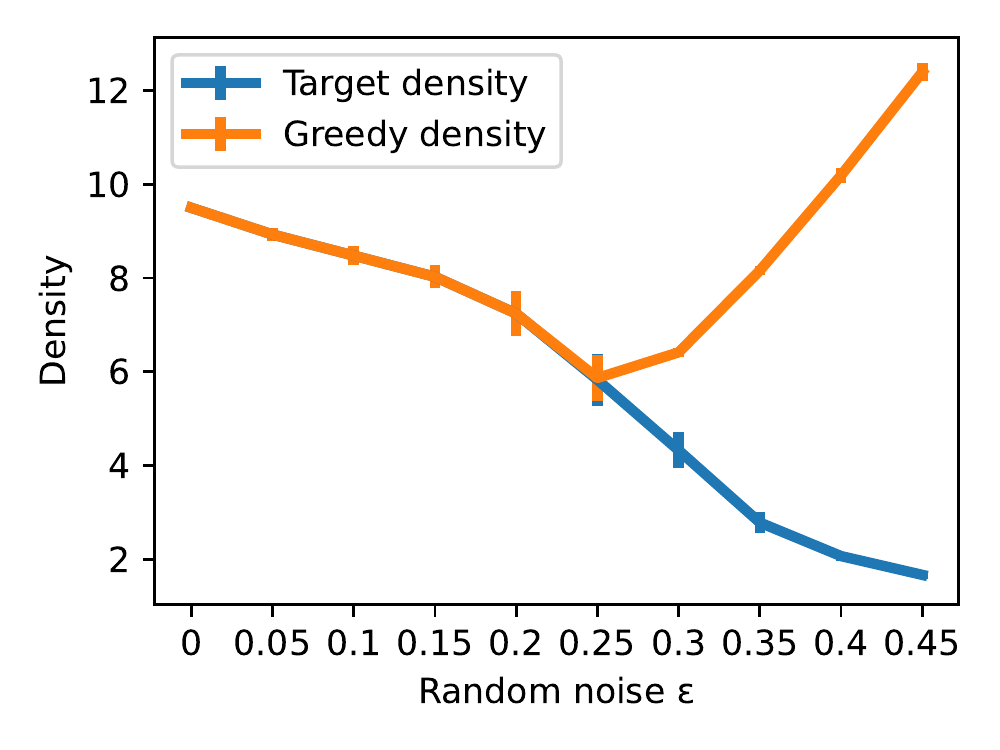}%
\includegraphics[width=0.5\textwidth]{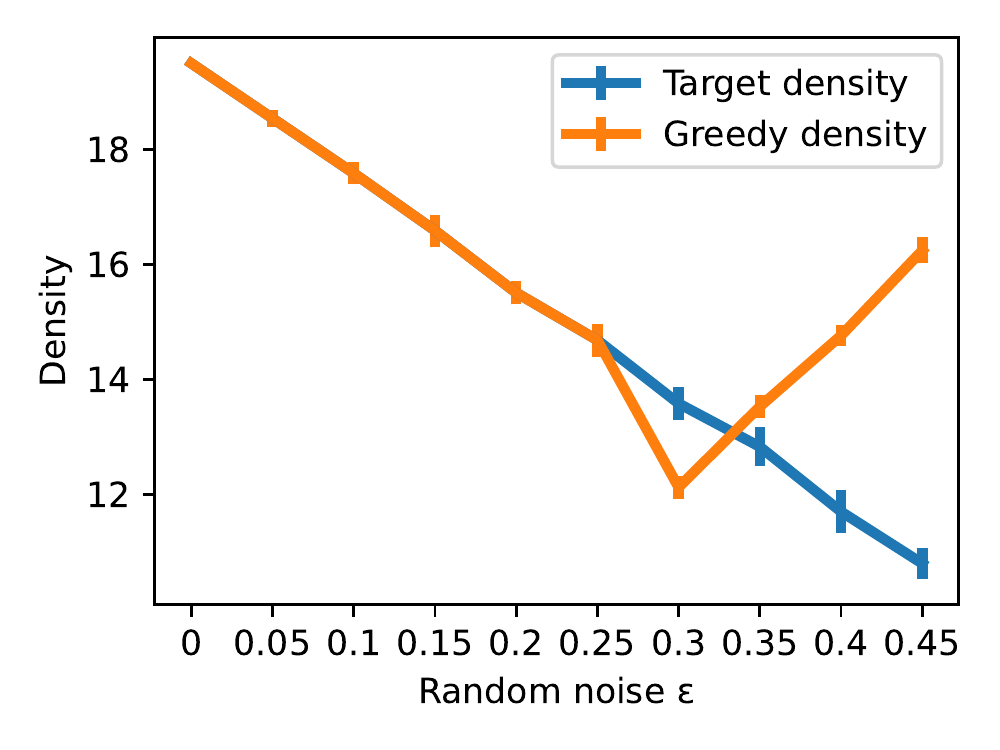}
}
\caption{Density of the subgraph induced by the target labels and the subgraph induced by the labels chosen by the greedy algorithms as a function of noise $\epsilon$ in the network. The line shows the mean density of 10 runs for each $\epsilon$, and the vertical error bars show their standard deviation. The results for \alggreedyand algorithm are on the left and for \alggreedyor on the right.}
\label{synthetic}
\end{figure}

\begin{figure}[t!]
\centerline{\includegraphics[width=0.5\textwidth]{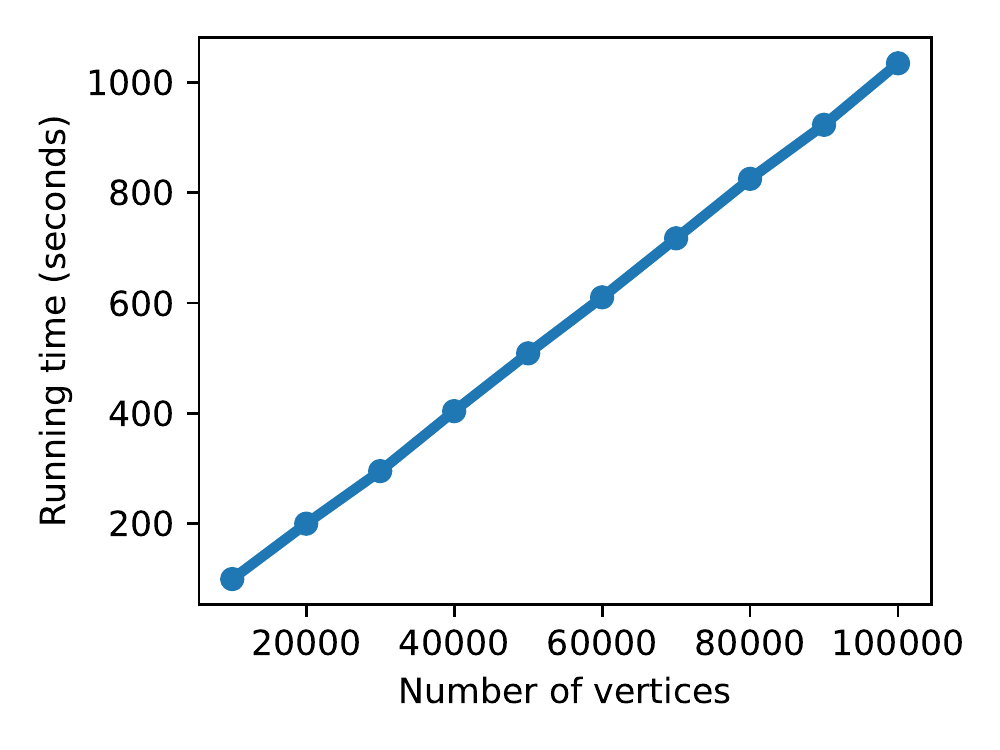}%
\includegraphics[width=0.5\textwidth]{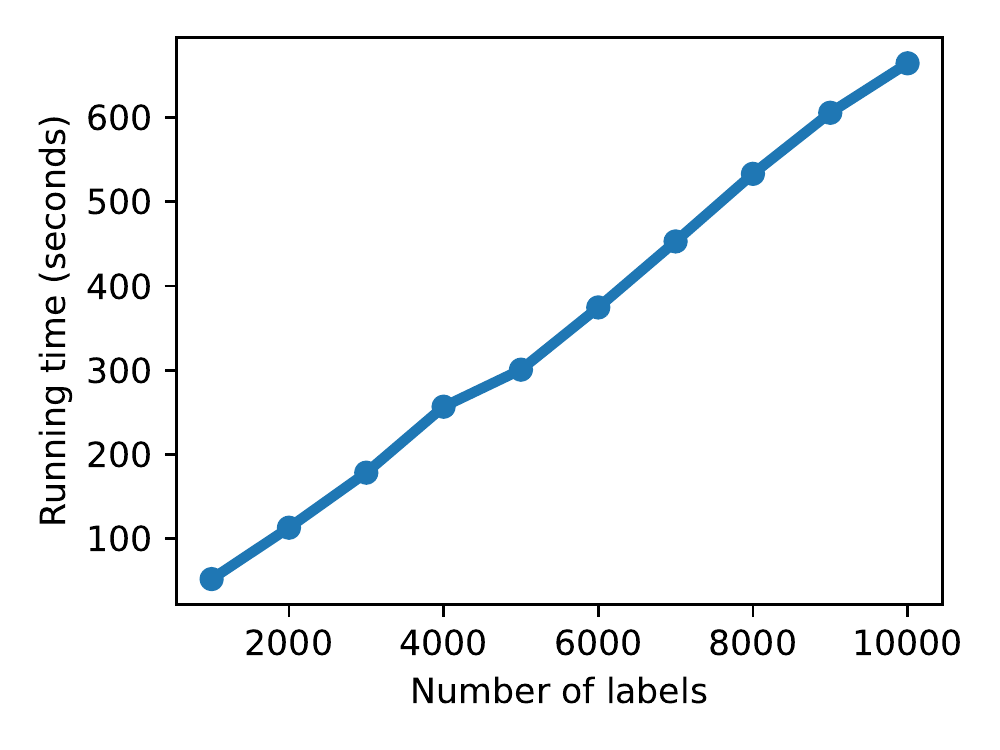}
}
\caption{Running time of the \alggreedyand algorithm as a function of the number of vertices (left) and the number of labels (right) in our synthetic graphs.}
\label{runtime_conjunction}
\end{figure}

\begin{figure}[t!]
\centerline{\includegraphics[width=0.5\textwidth]{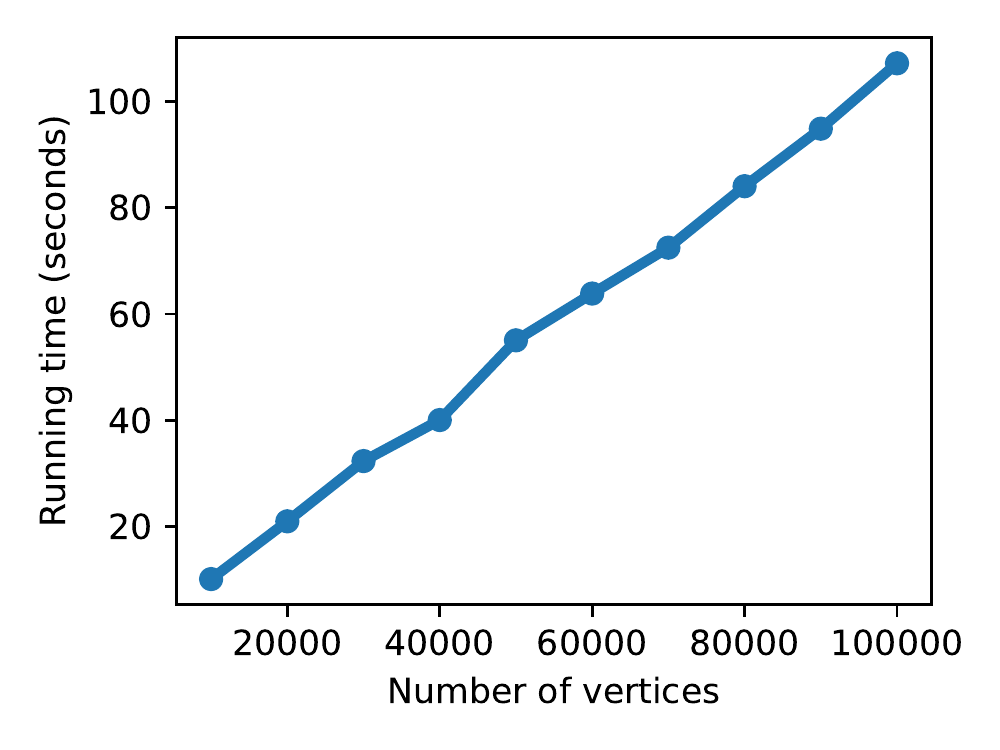}%
\includegraphics[width=0.5\textwidth]{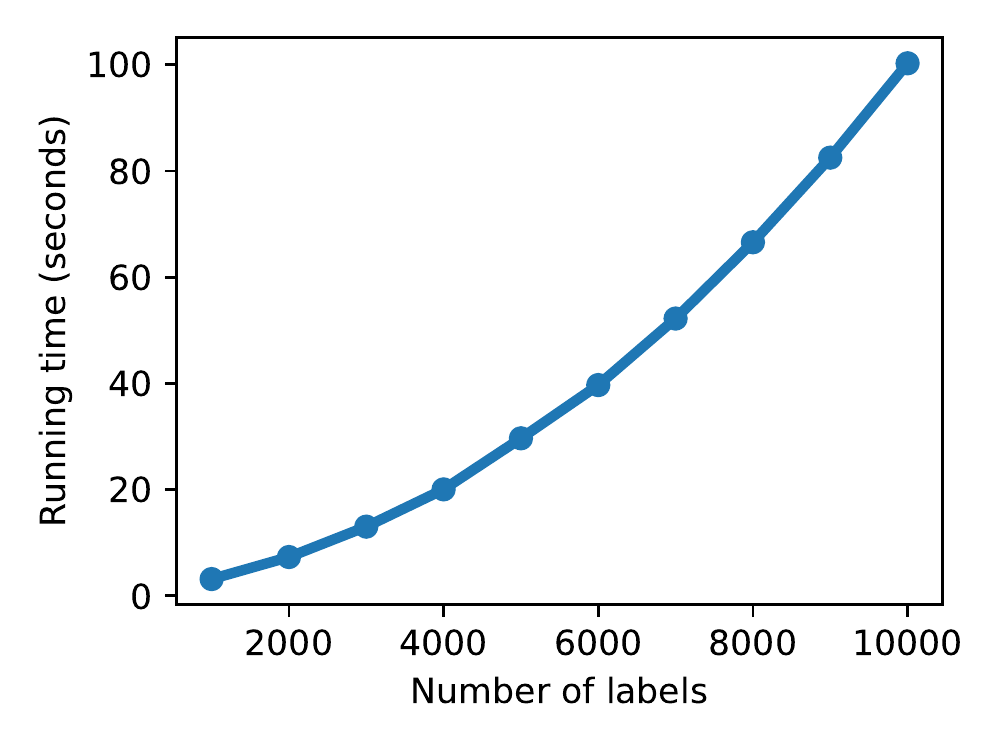}
}
\caption{Running time of the \alggreedyor algorithm as a function of the number of vertices (left) and the number of labels (right) in our synthetic graphs.}
\label{runtime_disjunction}
\end{figure}

We repeat the experiments with increasing values of $\epsilon$ and compare the density of the subgraph induced by the target labels to the density of the subgraph induced by the labels of the greedy algorithms. For each $\epsilon$, we run the experiment 10 times and compute the mean and standard deviation of the runs. The results are shown in Figure~\ref{synthetic}.

In both cases, the greedy algorithms correctly find the target labels for small values of $\epsilon$. After $\epsilon > 0.25$ for \alggreedyand and after $\epsilon > 0.35$ for \alggreedyor, the algorithms start to find other sets of labels, which yield higher densities than the target labels as many of the edges in the target clique have been removed and other edges have been added. However, at $\epsilon = 0.30$, the \alggreedyor returns a suboptimal solution that yields a slightly lower density than the target labels.

We confirm the theoretical running times of the algorithms by setting $\epsilon=0.2$ and performing experiments with increasingly large graphs, where the number of total vertices goes from 10000 up to 100000 while other aspects of the experiments remain constant. Similarly, we test how the running times of the algorithms scale as the number of total labels in our synthetic graph increases from 1000 to 10000. The results for \alggreedyand are shown in Figure~\ref{runtime_conjunction} and results for \alggreedyor in Figure~\ref{runtime_disjunction}.

As expected, the running times of both algorithms scale linearly with the number of vertices in the graph. Furthermore, the running time of our naive implementation of \alggreedyor appears to scale quadratically with the number of labels, while the scaling for \alggreedyand is close to linear. These results confirm our theoretical analysis and show that our algorithms can be applied to large graphs in practice.

\textbf{Experiments with real-world datasets:}
We test the greedy algorithms by running experiments on four real-world datasets. The first dataset is the Enron Email Dataset\footnote{\url{https://www.cs.cmu.edu/~./enron/}}, which consists of publicly available emails from employees of a former company called Enron Corporation. We collect the emails in sent mail folders and construct a graph where new edges are added between the sender and the recipients of each email. Each edge has labels consisting of the stemmed words in the email's title, with stop words and words including numbers removed. 

The second dataset consists of high energy physics theory publications (HEP-TH) from the years 1992 to 2003. The data was originally released in KDD Cup\footnote{\url{https://www.cs.cornell.edu/projects/kddcup/datasets.html}} but we use a preprocessed version of the data available in GitHub\footnote{\url{https://github.com/chriskal96/physics-theory-citation-network}} We create the network by adding authors as vertices, and edges between any two authors are added if they share at least two publications. The edges between authors are then given labels which consist of the stemmed words in the titles of the shared articles between the two authors. We exclude stop words and words including numbers from the titles the same way as for the Enron dataset.

The third dataset consists of publications from the DBLP\footnote{\url{https://www.aminer.org/citation}} dataset~\citep{Tang:08KDD}. From this dataset, we chose publications from ECMLPKDD, ICDM, KDD, NIPS, SDM, and WWW conferences. The network is constructed in the same way as for the HEP-TH data, with authors as vertices, two or more shared publications as edges, and stemmed and filtered words from the titles as labels.

The fourth and final dataset consists of the latest 10000 tweets collected from Twitter API\footnote{\url{https://developer.twitter.com/en/docs/twitter-api}} with the hashtag \#metoo by the 27th of May, 23:59 UTC. We create the network by having users as vertices with an edge between a pair of users if one of them has retweeted or responded to one of the other's tweets. The labels on the edge are then any hashtags in the retweets or response tweets between the two users.

We construct the networks by filtering out labels that appear in less than 0.1\% of the edges in the Enron and Twitter datasets, or labels that occur in less than 0.5\% of the papers in the case of the HEP-TH and DBLP datasets. The sizes, label counts, and densities of the resulting graphs are shown in Table~\ref{graph_sizes}. 

We run the greedy algorithms on each of these graphs, and compare the results against the densest subgraph ignoring the labels (\algbase).
 We report the statistics for the label-induced subgraphs and the densest subgraphs in Table~\ref{subgraph_results}. 

For each of the datasets, both algorithms find label-induced subgraphs with higher densities than in the original graphs. In most cases, the restriction of constructing label-induced subgraphs results in clearly lower densities compared to the densest label-ignorant subgraphs. Interestingly, for the DBLP dataset \alggreedyand finds a label-induced subgraph with a very high density that is close to the density of the densest subgraph ignoring the labels. The running times are practical: the algorithm processes networks with 100 000 edge-label pairs in seconds.

For Enron and HEP-TH datasets, the \alggreedyor returns large sets of labels resulting in large subgraphs, whereas the \alggreedyand algorithm selects only a few labels with smaller induced subgraphs in each case. For the Twitter dataset, both greedy algorithms select only one label, which induces a small subgraph with a notably higher density than the original graph. 

\begin{table}[t!]

\caption{Basic characteristics of the networks: number or vertices $\abs{V}$, number or edges $\abs{E}$, number of labels $\abs{L}$, number of edge-label pairs $p$, and the density $d(G) = \abs{E} / \abs{V}$.}

\pgfplotstabletypeset[
    begin table=\begin{tabular*}{\textwidth},
    end table=\end{tabular*},
    col sep=comma,
    columns = {dataset,n,m,L,p,density},
    columns/dataset/.style={string type, column type={@{\extracolsep{\fill}}l}, column name=\emph{Dataset}},
    columns/n/.style={fixed, set thousands separator={\,}, column type=r, column name=$\abs{V}$},
    columns/m/.style={fixed, set thousands separator={\,}, column type=r, column name=$\abs{E}$},
    columns/L/.style={fixed, set thousands separator={\,}, column type=r, column name=$\abs{L}$},
    columns/p/.style={fixed, set thousands separator={\,}, column type=r, column name=$p$},
    columns/density/.style={fixed, set thousands separator={\,}, column type=r, column name=$d(G)$},
    every head row/.style={
        before row={\toprule},
            after row=\midrule},
    every last row/.style={after row=\bottomrule},
]
{graph_results.txt}
\label{graph_sizes}
\end{table}

\begin{table}[t!]

\caption{Statistics for the resulting subgraphs for the greedy algorithms and the label-ignorant densest subgraph algorithm. For the label-induced subgraphs, we have the number of vertices $n$, the number of edges $m$, the size of the best set of labels $\abs{A}$, density $d$, and running time $t$ in seconds. For the densest subgraph, we show the number of vertices $n$ and density $d = m/n$.}
\setlength{\tabcolsep}{0pt}

\pgfplotstabletypeset[
    begin table=\begin{tabular*}{\textwidth},
    end table=\end{tabular*},
    col sep=comma,
    columns = {dataset,n-con,m-con,L-con,density-con,t-con,n-dis,m-dis,L-dis,density-dis,t-dis,n-decomp,density-decomp},
    columns/dataset/.style={string type, column type={@{\extracolsep{\fill}}l}, column name=\emph{Dataset}},
    columns/density/.style={fixed, set thousands separator={\,}, column type=r, column name=$d$},
    columns/n-con/.style={fixed, set thousands separator={\,}, column type=r, column name=$n$},
    columns/m-con/.style={fixed, set thousands separator={\,}, column type=r, column name=$m$},
    columns/density-con/.style={fixed, set thousands separator={\,}, column type=r, column name=$d$},
    columns/L-con/.style={fixed, set thousands separator={\,}, column type=r, column name=$\abs{A}$},
    columns/t-con/.style={fixed, set thousands separator={\,}, column type=r, column name=$t$},
    columns/n-dis/.style={fixed, set thousands separator={\,}, column type=r, column name=$n$},
    columns/m-dis/.style={fixed, set thousands separator={\,}, column type=r, column name=$m$},
    columns/density-dis/.style={fixed, set thousands separator={\,}, column type=r, column name=$d$},
    columns/L-dis/.style={fixed, set thousands separator={\,}, column type=r, column name=$\abs{A}$},
    columns/t-dis/.style={fixed, set thousands separator={\,}, column type=r, column name=$t$},
    columns/density-decomp/.style={fixed, set thousands separator={\,}, column type=r, column name=$d$},
    columns/n-decomp/.style={fixed, set thousands separator={\,}, column type=r, column name=$n$},
    every head row/.style={
        before row={\toprule
        & \multicolumn{5}{l}{\alggreedyand}
        & \multicolumn{5}{l}{\alggreedyor}
        & \multicolumn{2}{l}{\algbase}\\
        \cmidrule(r){2-6}
        \cmidrule(r){7-11}
        \cmidrule(r){12-13}
        },
            after row=\midrule},
    every last row/.style={after row=\bottomrule},
]
{graph_results.txt}
\label{subgraph_results}
\end{table}

\textbf{Experiments with $\alpha$-density:} Next we consider finding $\alpha$-dense subgraphs by running the \alggreedyandalpha and \alggreedyoralpha algorithms on the same datasets. The results are shown in Table~\ref{alpha_conjunction_results} and Table~\ref{alpha_disjunction_results}, respectively.

As pointed out by Theorem~\ref{thr:alphamax}, the optimal $\alpha$-dense subgraph is also the densest for sufficiently large $\alpha$. We use a binary search to find the maximum $\alpha$ for which the greedy algorithm yields a non-empty graph. The values of $\alpha$ in these tables are chosen by the binary search process while searching for the maximum. Additionally, we experiment with using a smaller $\alpha$ value of 0.25 times the maximum. For clarity, we exclude duplicated results where different values of $\alpha$ yield the same subgraph.

We see that the greedy algorithms for the two problems often find the same solution, as suggested by Theorem~\ref{thr:alphamax}. However, this is not always the case due to the heuristic nature of these algorithms. Interestingly, with $\alpha=2.5$ for the HEP-TH dataset, the \alggreedyandalpha finds a denser subgraph than the one found by \alggreedyand, while an additional manual experiment with $\alpha=1.4$ results in the greedy algorithm suboptimally returning an empty graph. For the DBLP dataset using $\alpha = 3.6$ leads to the same solution as \alggreedyand, but larger values of $\alpha$ lead the greedy algorithm to choose a suboptimal first label resulting in less dense subgraphs.
For Enron and HEP-TH datasets, \alggreedyoralpha only finds subgraphs with a slightly lower density than the ones found by \alggreedyor. 

In general, we observe that using smaller values of $\alpha$ results in subgraphs with more vertices and edges in both the disjunctive and conjunctive cases. Thus having $\alpha$ as a parameter gives us more control over the size of the resulting subgraph and allows us to look for both smaller and larger groups of densely connected nodes.

\begin{table}[t!]
\caption{Results for running \alggreedyandalpha on the four datasets with the different values of $\alpha$. For each resulting subgraph, we have the density $d = m/n$, the chosen labels, the number of nodes $n$, and the number of edges $m$. Results matching the densest subgraph found by \alggreedyand are shown in bold.} 
\hfill
\setlength{\tabcolsep}{1mm}
\adjustbox{valign=t}{
\begin{tabular}{llllll}
\toprule
Dataset & $\alpha$ & $d$ & labels & $n$ & $m$ \\
\midrule
Enron & 1.032 & 1.5088 & meet & 1875 & 2829 \\
 & 1.548 & 1.5971 & legal & 479 & 765 \\
 & \textbf{1.7218} & \textbf{1.7222} & \textbf{mopa, action} & \textbf{18} & \textbf{31} \\[1mm]
HEP-TH & 0.6249 & 1.3656 & theori & 2410 & 3291 \\
 & 2.4998 & 2.5 & casimir, light & 6 & 15 \\[1mm]
DBLP & 1.0927 & 1.3005 & learn & 3780 & 4916 \\
 & \textbf{3.6} & \textbf{12} & \textbf{novel, rate, techniqu} & \textbf{25} & \textbf{300} \\
 & 4.3708 & 6.2 & forecast, experi, use & 15 & 93 \\[1mm]
Twitter & 0.6457 & 1.1381 & metoo & 8297 & 7290 \\
 & \textbf{2.5828} & \textbf{2.5833} & \textbf{metooase} & \textbf{12} & \textbf{31} \\
\bottomrule
\end{tabular}}
\hspace*{\fill}
\label{alpha_conjunction_results}
\end{table}

\begin{table}[t!]
\caption{Results for running \alggreedyoralpha on the four datasets with the different values of $\alpha$. For each resulting subgraph, we have the density $d=m/n$, the chosen labels or their amount, the number of nodes $n$, and the number of edges $m$. Results matching the densest subgraph found by \alggreedyor are shown in bold.} 
\hfill
\adjustbox{valign=t}{
\begin{tabular*}{\textwidth}{@{\extracolsep{\fill}}lllp{4cm}ll}
\toprule
Dataset & $\alpha$ & $d$ & labels & $n$ & $m$ \\
\midrule
Enron & 1.32 & 1.8497 & (553 labels) & 7982 & 14765 \\
 & 1.98 & 2.1715 & (964 labels) & 3633 & 7889 \\
 & 2.145 & 2.1810 & (577 labels) & 2497 & 5446 \\
 & 2.1656 & 2.1798 & (547 labels) & 2364 & 5153 \\
 & 2.1798 & 2.1799 & (582 labels) & 2329 & 5077 \\[1mm]
HEP-TH & 0.4255 & 1.6393 & (75 labels) & 4738 & 7767 \\
 & 1.02 & 1.6557 & (68 labels) & 4650 & 7699 \\
 & 1.53 & 1.6938 & (76 labels) & 4063 & 6882 \\
 & 1.6575 & 1.7011 & (52 labels) & 3567 & 6068 \\
 & 1.7013 & 1.7023 & (50 labels) & 3426 & 5832 \\[1mm]
DBLP & 0.5534 & 1.5953 & (93 labels) & 10532 & 16802 \\
 & 1.326 & 1.6098 & (83 labels) & 10295 & 16573 \\
 & \textbf{2.2137} & \textbf{2.2140} & \textbf{novel} & \textbf{243} & \textbf{538} \\[1mm]
Twitter & 0.6457 & 1.1685 & (49 labels) & 7969 & 9312 \\
 & 1.548 & 1.8857 & metooase, streamer, anubhavmohanty, victimservices, causette, istandwithjohnny & 70 & 132 \\
 & \textbf{2.5828} & \textbf{2.5833} & \textbf{metooase} & \textbf{12} & \textbf{31} \\
\bottomrule
\end{tabular*}}
\hspace*{\fill}
\label{alpha_disjunction_results}
\end{table}

\textbf{Case study:} We analyze the label-induced dense subgraphs for the Twitter and DBLP datasets by repeatedly running the \alggreedyand algorithm for these graphs. After running the algorithm, we exclude the edges from the output edge-induced subgraph and run the algorithm again on the remaining graph.  The first 8 resulting sets of labels, as well as densities and sizes for the induced subgraphs, are shown in Table~\ref{labels_results}.

For the DBLP graph, the algorithm finds a group of 25 authors that have each written at least two papers together with a shared topic, as well as other relatively large groups of authors whose edges form almost perfect cliques. The labels representing stemmed words can be used to interpret the topics of publications for these groups of authors having tight collaboration.

For the Twitter data of \#metoo tweets, the densest label-induced subgraphs are formed by mostly looking at individual hashtags. This detects groups of people tweeting about \#MeTooASE referring to the French Me Too movement for foster children, as well as groups closely discussing other topics in the context of the Me Too movement such as live streaming or the recent trial between Johnny Depp and Amber Heard.

We see that the same labels also appear when searching for $\alpha$-dense subgraphs.
For example, by looking at the labels for $\alpha=1.548$ for the Twitter dataset in Table~\ref{alpha_disjunction_results} and comparing them with the labels in Table~\ref{labels_results}, we can see that this subgraph found by the \alggreedyoralpha algorithm in fact consists of multiple smaller groups of people discussing a variety of topics that we previously discovered.

\begin{table}[t!]
\caption{Label sets with corresponding subgraph densities and sizes selected by running the \alggreedyand algorithm repeatedly on the graphs for DBLP and Twitter datasets. The labels are stemmed words from publication titles for DBLP, and tweet hashtags for Twitter data. The densities are not monotonically decreasing as the greedy algorithm does not always find the optimal solution.}

\hfill
\setlength{\tabcolsep}{1mm}
\adjustbox{valign=t}{\begin{tabular}{lp{4cm}ll}

& DBLP \\
\toprule
$d$ & labels & $n$ & $m$ \\
\midrule
12.0 & novel, rate, techniqu & 25 & 300 \\
10.74 & identif, combin, process & 23 & 247 \\
6.2 & forecast, experi, use & 15 & 93 \\
6.0 & heterogen, manag, stream, use & 13 & 78 \\
2.0 & heterogen, segment & 5 & 10 \\
3.13 & heterogen, manag, use, dynam & 8 & 25 \\
2.5 & heterogen, sourc, toward & 6 & 15 \\
2.5 & heterogen, construct, dimension, network & 6 & 15 \\
\bottomrule
\end{tabular}}
\hfill
\adjustbox{valign=t}{
\begin{tabular}{llll}
& Twitter \\
\toprule
$d$ & labels & $n$ & $m$ \\
\midrule
2.58 & metooase & 12 & 31 \\
1.88 & streamer & 16 & 30 \\
1.75 & anubhavmohanty & 16 & 28 \\
1.71 & victimservices & 7 & 12 \\
1.83 & causette, lfi & 6 & 11 \\
1.63 & istandwithjohnny & 8 & 13 \\
1.43 & rupertmurdock & 7 & 10 \\
1.25 & marilynmanson & 8 & 10 \\
\bottomrule
\end{tabular}}
\hspace*{\fill}
\label{labels_results}
\end{table}
\section{Concluding remarks}\label{sec:conclusions}

In this paper, we considered the problem of finding dense subgraphs that are induced by labels on the edges. More specifically, we considered two cases: conjunctive-induced dense subgraphs, where the edges need to contain the given label set, and disjunctive-induced dense subgraphs,
where the edges need to have only one label in common. As a measure of quality, we used the average degree of a subgraph.
We showed that both problems are \np-hard, and we proposed a greedy heuristic to find dense induced subgraphs. By maintaining suitable counters we were able to find subgraphs in quasi-linear time: $\bigO{p \log \abs{L}}$
for conjunctive-induced graphs and $\bigO{p \log^2 \abs{L}}$ for disjunctive-induced graphs.
In addition, we analyzed the related problem of maximizing the number of edges minus $\alpha$ times the number of vertices and showed how the optimal solutions to these problems are connected. We proved that the problem of maximizing this $\alpha$-density is \np-hard and inapproximable unless $\poly=\np$. We adopted the greedy algorithms for the conjunctive and disjunctive cases of this problem resulting in a running time of $\bigO{p \log \abs{L}}$ for the disjunctive case as well.
We then demonstrated that the algorithms are practical, they can find ground truth in synthetic datasets, and find interpretable results from real-world networks.

While this paper focused on the conjunctive and disjunctive cases, future work could explore other ways to induce graphs from a label set and design efficient algorithms for such tasks. Another direction for future work is to relax the requirement that every edge/node must be induced from labels. Instead, we can allow some deviation from this requirement but then penalize the deviations appropriately when assessing the quality of the subgraph.

\section{Declarations}

\textbf{Funding}
This research is supported by the Academy of Finland projects MALSOME (343045).

\noindent\textbf{Disclosure of potential conflicts of interest}
Not applicable

\noindent\textbf{Ethics approval}
Not applicable

\noindent\textbf{Consent to participate}
Not applicable

\noindent\textbf{Consent for publication}
Not applicable

\noindent\textbf{Availability of data and material}
Publicly available datasets were used. See Section~\ref{sec:exp} for the links.

\noindent\textbf{Code availability}
\url{https://version.helsinki.fi/dacs/}.

\noindent\textbf{Authors' contributions}
Nikolaj Tatti formulated the problems. Iiro Kumpulainen implemented the algorithms and conducted the experiments. Both authors wrote the manuscript.

\noindent\textbf{Version of Record}
This version of the article has been accepted for publication, after peer review but is not the Version of Record and does not reflect post-acceptance improvements, or any
corrections. The Version of Record is available online at: \url{http://dx.doi.org/10.1007/s10994-023-06377-y}

\bibliography{references}

\appendix

\section{Computational complexity proofs}\label{sec:proofs}

\begin{proof}[Proof of Theorem~\ref{thr:npand}]
We will prove the claim by reducing \textsc{3ExactCover} to the densest subgraph problem. In \textsc{3ExactCover} we are given a set $X$ and a family $\fm{C}$ of subsets of size 3 over $X$ and asked if there is a disjoint subset of $\fm{C}$ whose union is $X$.

Assume that we are given a set $X$ and a family $\fm{C} = \set{C_1, \ldots, C_N}$ of $N$ subsets. We set labels to be $L = \enset{1}{N}$. The vertices $V$ contain $N$ vertices $y_1, \ldots, y_N$, and an additional vertex $z$. We connect each $y_i$ to $z$, labeled with $L \setminus \set{i}$. For each overlapping $C_i$ and $C_j$, we introduce
$4N$ additional vertices and
$2N$ edges, each edge connecting two \emph{unique nodes}, and labeled as 
$L \setminus \set{i, j}$.

We claim that for $\abs{X} \geq 5$,
\textsc{3ExactCover} has a solution if and only if
there is an induced graph $H$ with $d(H) \geq \abs{X}/ (\abs{X} + 3)$. 

Assume that we are given a set of labels $A \subset L$. Let $B = L \setminus A$.
Let $k$ be the number of set pairs in $B$ that are overlapping, that is,
\[
    k = \abs{\set{\set{i, j} \mid i, j \in B, C_i \cap C_j \neq \emptyset}}.
\]

Then the density of the corresponding graph $H = G(\indand{}, A)$ is equal to
\[
    d(H) = \frac{\abs{B} + 2Nk}{\abs{B} + 1 + 4Nk}\quad.
\]

Assume that $k > 0$. Since $\abs{B} \leq N$, we can bound the density with
\[
    d(H) = \frac{\abs{B} + 2Nk}{\abs{B} + 1 + 4Nk}
    \leq \frac{N + 2Nk}{N + 1 + 4Nk}
    <  \frac{N + 2Nk}{N + 4Nk} \leq \frac{3}{5}\quad.
\]

Assume that $k = 0$. Then the density is equal to $\abs{B}/ (\abs{B} + 1)$.
Let $\fm{U} = \set{C_i \mid i \in B}$. Since $\fm{U}$ is disjoint, $3\abs{B} \leq \abs{X}$ and the equality holds if and only if $\fm{U}$ covers $X$.

Assume that there is a subgraph $H = G(\indand{}, A)$ with
$d(H) \geq \abs{X}/ (\abs{X} + 3)$.
Since we assume that $\abs{X} \geq 5$, we have $d(H) \geq 5/8 > 3/5$, and the preceding discussion shows that the sets
corresponding to $A$ form
an exact cover of $X$.

On the other hand, if there is an exact cover in $\fm{C}$, then $d(G(\indand{}, A)) = \abs{X}/ (\abs{X} + 3)$, where $A$ is the set of labels corresponding to the cover. This shows that maximizing the density of the label-induced subgraph is an \np-hard problem.
\end{proof}

\begin{proof}[Proof of Theorem~\ref{thr:npor}]
We will prove the claim by reducing \textsc{3ExactCover} to the densest subgraph problem. In \textsc{3ExactCover} we are given a set $X$ and a family $\mathcal{C}$ of subsets of size 3 over $X$ and asked if there is a disjoint subset of $\mathcal{C}$ whose union is $X$.

Assume that we are given a set $X$ and a family $\fm{C} = \set{C_1, \ldots, C_N}$ of $N$ subsets. The vertices $V$ consists of the set $X$, $N$ additional vertices $y_1, \ldots, y_N$, and 2 more vertices $Z = z_1, z_2$. We have $N$ labels, $L = \enset{1}{N}$.

Next, we define the edges $E$. Connect each $x \in X$ to $Z$,
and label the edges with labels $\set{i \mid x \in C_i}$. Then for each $C_i$, we connect $z_1$ to $y_i$, labeled with $i$.

We claim that \textsc{3ExactCover} has a solution if and only if
the optimal label-induced graph has the density of $7\abs{X} / (6 + 4 \abs{X})$.

Given a non-empty set of labels $A \subseteq L$, the density of the corresponding graph $H$ is equal to $g(k, \abs{A})$, where
    $g(s, t) = \frac{2s + t}{2 + s + t}$,
and $k$ is the size of the union of sets in $\fm{C}$ corresponding to $A$.

Note that since $k \geq 3$, we have $2k > 2 + k$. Thus, $\partial \log g / \partial t = 1/(2k + t) - 1/(2 + k + t) < 0$, and consequently $g(k, t) > g(k, t')$ when $t < t'$.

Since each set in $\fm{C}$ is of size 3, we have $\abs{A} \geq k/3$. Thus,
\[
    g(k, \abs{A}) \leq g(k, k/3) = \frac{7k}{6 + 4k} \leq \frac{7\abs{X}}{6 + 4\abs{X}},
\]
where the equalities hold if and only if $k = \abs{X}$ and $3\abs{A} = k$,
that is, $A$ corresponds to an exact cover of $X$.
\end{proof}

\begin{proof}[Proof of Corollary~\ref{cor:npalpha}]
Let us adopt the notation of the proof of Theorem~\ref{thr:npor}.
The proof shows that \textsc{3ExactCover} has a solution if and only if
there is an induced graph $H$ with $7d(H) \geq \abs{X}/ (4\abs{X} + 6)$.
Moreover, there are $N +2$ nodes in the graph, so a difference between two densities is at least $(N + 2)^{-2}$.
Consequently,
if we set $\tau = 7d(H) \geq \abs{X}/ (4\abs{X} + 6) - 0.5(N + 2)^{-2}$, then Theorem~\ref{thr:alphamax} implies that
\textsc{3ExactCover} has a solution if and only if there is $H$ with $g(H, \tau) > 0$. This proves the hardness and the inapproximability since any algorithm with a multiplicative guarantee will find the optimal solution.
The proof for $\indand{}$ is similar.
\end{proof}

\end{document}